\DeclareMathAlphabet{\bm}{OML}{cmm}{b}{it}
\newtheorem{theorem}{Theorem}
\newtheorem{lemma}{Lemma}
\newtheorem{corollary}{Corollary}
\newtheorem{remark}{Remark}
\newtheorem{proposition}{Proposition}
\newtheorem{example}{Example}
\newcommand{\qed}{\hfill \IEEEQED}
\newcommand{\mc}{-\!\!\!\!\circ\!\!\!\!-}
\newcommand{\bol}[1]{\mathbf{#1}}
\newcommand{\rom}[1]{\mathrm{#1}}
\newcommand{\Pe}{\rom{P}_{\rom{e}}}
\newcommand{\Pc}{\rom{P}_{\rom{c}}}
\newcommand{\argmin}{\mathop{\rm argmin}\limits}
\begin{document}

\title{
Tight Exponential Strong Converses for Lossy Source Coding
with Side-Information and Distributed Function Computation
}

\author{Shun Watanabe}


\maketitle
\begin{abstract}
The exponential strong converse for a coding problem states that, if a coding rate is beyond the
theoretical limit, the correct probability converges to zero exponentially. For the lossy source
coding with side-information, also known as the Wyner-Ziv (WZ) problem, a lower bound on the
strong converse exponent was derived by Oohama. In this paper, we derive the tight strong
converse exponent for the WZ problem; as a special case, we also derive the tight strong
converse exponent for the distributed function computation problem. 
For the converse part, we use the change-of-measure argument developed in the literature and the
soft Markov constraint introduced by Oohama; the matching achievability is proved via the Poisson matching approach recently introduced by
Li and Anantharam. 
Our result is build upon the recently derived tight strong converse exponent for the Wyner-Ahlswede-K\"orner (WAK)
problem; however, compared to the WAK problem, more sophisticated argument
is needed. As an illustration of the necessity of the soft Markov constraint, we present an example
such that the soft Markov constraint is strictly positive.
\end{abstract}

\section{Introduction}

The exponential strong converse for a coding problem states that, if a coding rate is beyond the
theoretical limit, the correct probability converges to zero exponentially. 
Proving such a claim was initiated by Arimoto for the channel coding problem \cite{arimoto:73}; 
later, the strong converse exponent was studied by Dueck and K\"orner in \cite{Due-Kor:79};
see also \cite{Oohama15b} for the equivalence of the two exponents derived in \cite{arimoto:73} and \cite{Due-Kor:79}.
An alternative proof of the strong converse exponent using the change-of-measure argument to be explained later is also available in \cite{Watanabe:24}.

Even though the tight strong converse exponent for point-to-point problems or simple multi-user problems, such as the Slepian-Wolf problem \cite{oohama:94},
have been identified, the strong converse exponent for multi-user problems have been unsolved until recently.
A significant progress was made by Oohama in a series of paper including \cite{Oohama:18,Oohama:19}.
More recently, the tight strong converse exponent of the Wyner-Ahlswede-K\"orner (WAK) problem \cite{wyner:75c,ahlswede:75} was derived in \cite{TakWat:25};
the converse part of \cite{TakWat:25} is based on the change-of-measure argument and the soft Markov constraint to
be explained later in this section. 

In this paper, we consider the lossy source coding with side-information, also known as the 
Wyner-Ziv (WZ) problem \cite{wyner:76}. For this problem, a lower bound on the strong converse exponent was 
derived by Oohama in \cite{Oohama:18}. In this paper, we derive the tight strong converse exponent of the WZ problem;
as a special case, we also derive the tight strong converse exponent of the distributed function computation problem \cite{OrlRoc:01}.
Even though this paper is build upon the result in \cite{TakWat:25}, compared to the WAK problem, the WZ problem is more complicated 
and we need more sophisticated arguments. In fact, it should be mentioned that, even though the strong converse 
of the WAK problem was proved in the mid 70's \cite{ahlswede:76}, the strong converse (not exponential one) of the WZ problem was not solved until recently \cite{Oohama:18}.

As an illustration of our result, let us consider the following simple example of the distributed function computation.
Let $X^n=(X_1,\ldots,X_n)$ and $Y^n=(Y_1,\ldots,Y_n)$ be independent uniform random bit sequences of length $n$.
The encoder sends a message of $Rn$ bits, and the decoder shall compute the AND $Z_j= X_j\wedge Y_j$ of each coordinate,
where $0 < R < 1$. As a time-sharing scheme, if the encoder sends the first $Rn$ bits of $X^n$ and the decoder guesses the remaining
$(1-R)n$ bits of $Z_j$'s, then the correct probability is $(3/4)^{(1-R)n}$.
Even though $X^n$ and $Y^n$ are independent, perhaps surprisingly, our result implies that this time-sharing scheme is not optimal;
see Example \ref{example-AND}.

As mentioned above, our converse uses the change-of-measure argument.
This argument was originally introduced by Gu and Effros in \cite{GuEff:09,GuEff:11}
to prove strong converse for source coding problems where there exists a terminal that observes all
the random variables involved; a particular example is the Gray-Wyner (GW) problem \cite{GraWyn:74}.
In the argument of \cite{GuEff:09,GuEff:11}, we single-letterize information quantities not under the original
source (or channel) but under another modified measure which depends on the code and under which the code is error free.
A difficulty of applying this argument to the so-called distributed coding problems, such as the WAK problem, is that
the characterization of asymptotic limits involve auxiliary random variables and Markov chain constraints. 
By using the idea of soft Markov constraint introduced by Oohama \cite{Oohama:18},
the argument was further developed in \cite{TyaWat20} so that it can be applied to distributed coding problems.
(see also an early attempt in \cite{Watanabe:17}).
More specifically, instead of deriving Markov chain constraints directly, we add 
conditional mutual informations as penalty terms; then, by controlling the multipliers of 
those penalty terms so that the Markov constraints are satisfied at the end, we derive the strong converse.   
More recently, a variation of the change-of-measure argument was further developed by Hamad, Wigger, and Sarkiss in \cite{HamWigSar:23}
so that it can be applied to more involved multi-user networks in a concise manner; rather than adding Markov constraints as penalty terms,
they prove the Markov constraints in an asymptotic limit. 
Although the change-of-measure argument in strong converse originates to \cite{GuEff:09,GuEff:11},
the idea of changing measure via conditioning and
then performing single-letterization can be traced back to the work of Csisz\'ar in the context of the Sanov theorem \cite{csiszar:84}; 
see also a monograph by Yu for thorough exposition of this idea and application
to problems other than strong converse \cite{Yu-monograph}. 

A key observation in \cite{TakWat:25} is the recognition that the soft Markov constraint is a part of the tight exponent
rather than a penalty term that must vanish at the end; namely, we fix the multiplier of the soft Markov constraint to be unity.
In this paper, we follow the same approach. However, compared to the WAK problem, there are new difficulties in the WZ problem.
Particularly, the targeted expression of the exponent is not clear at first. In fact, the strong converse result in \cite{TyaWat20} already
provides some lower bound on the exponent; however, that lower bound is not tight enough, and we cannot 
derive a matching achievability. In the WZ problem, the rate-distortion function can be written as either in 
the mutual information difference form or the conditional mutual information form; the two forms coincide
since the auxiliary random variable satisfies the Markov chain constraint. In \cite{TyaWat20}, the latter form
was employed, and the soft Markov constraint was added on top of that. A crucial observation in this paper is that
the conditional mutual information form can be interpreted as the mutual information difference form plus
the soft Markov constraint; in other words, the conditional mutual information form already contains the
soft Markov constraint. This new observation is the missing piece in \cite{TyaWat20}, which enables us
to derive the tight strong converse exponent; see the paragraph containing \eqref{eq:identity-two-forms} and Remark \ref{remark:comparison-to-TyaWat} for more detail. 

For the achievability, the main difficulty is to identify the role of soft Markov constraints; even though the Markov constraints may be violated
in the expression of the exponent, we can only use coding schemes satisfying the Markov constraints because of 
the nature of distributed coding.
In \cite{TakWat:25}, we derived the soft Markov constraint by evaluating the deviation of Markovity in forward direction
in the sense that the auxiliary random variable shows up on the conditional side. 
In contrast to the WAK problem, the WZ problem involves two soft Markov constraints, and it is difficult to handle the two of them using the same approach as \cite{TakWat:25}.
To overcome this difficulty, we observe that soft Markov constraints can be derived 
by a different principle, evaluation of deviation of Markovity in reverse direction. 
To that end, we use the Poisson matching approach
introduced by Li and Anantharam in \cite{LiAna:21}. 
For more detail, see Remark \ref{remark:deviation-Markovity}.

\paragraph*{Notation and basic facts}

Throughout the paper, random variables (eg.~$X$) and their realizations (eg.~$x$) are denoted by capital and
lower case letters, respectively. Unless otherwise stated, all random variables take values in some finite alphabet
which are denoted by the respective calligraphic letters (eg.~${\cal X}$). The probability distribution of random variable
$X$ is denoted by $P_X$. Similarly, 
$X^n=(X_1,\ldots,X_n)$ and $x^n=(x_1,\ldots,x_n)$ denote, respectively, a random vector and its realization in the $n$th
Cartesian product ${\cal X}^n$. 
For an index $1 \le j \le n$, subsequences $(X_1,\ldots,X_{j-1})$
and $(X_{j+1},\ldots,X_n)$ are abbreviated as $X_j^-$ and $X_j^+$, respectively.  
For a finite set ${\cal S}$, the cardinality of ${\cal S}$ is denoted by $|{\cal S}|$.
For a subset ${\cal T} \subset {\cal S}$, the complement ${\cal S}\backslash {\cal T}$ is denoted by ${\cal T}^c$. 
The indicator function is denoted by $\bol{1}[\cdot]$. 
Information theoretic quantities are denoted in the same manner as \cite{csiszar-korner:11};
e.g.~the entropy and the mutual information are denoted by $H(X)$ and $I(X\wedge Y)$, respectively. 
All information quantities and rates are evaluated with respect to the logarithm of base $2$.

The set of all distributions on ${\cal X}$ is denoted by ${\cal P}({\cal X})$.
The set of all channel from ${\cal X}$ to ${\cal Y}$ is denoted by ${\cal P}({\cal Y}|{\cal X})$.
We will also use the method of types \cite{csiszar-korner:11}. 
The set of all types on ${\cal X}$ is denoted by
${\cal P}_n({\cal X})$. For a given type $P_{\bar{X}} \in {\cal P}_n({\cal X})$, the set of all
sequences with type $P_{\bar{X}}$ is denoted by ${\cal T}_{\bar{X}}$. For a given joint type $P_{\bar{X}\bar{Y}}$
and a sequence $x^n \in {\cal T}_{\bar{X}}$, the set of all sequences whose joint type with $x^n$ is $P_{\bar{X}\bar{Y}}$ 
is denoted by ${\cal T}_{\bar{Y}|\bar{X}}(x^n)$.
For type $P_{\bar{X}}$ and joint type $P_{\bar{X}\bar{Y}}$, we use notations $H(\bar{X})$ and $I(\bar{X}\wedge \bar{Y})$,
where the random variables are distributed according to those type and joint type.



\section{Problem Formulation and Results} \label{section:problem-formulation}

In this section, we describe the problem formulation of the lossy source coding with side-information,
also known as the Wyner-Ziv problem. 
Let us consider a correlated source $(X,Y)$ taking values in ${\cal X}\times {\cal Y}$ and having joint distribution $P_{XY}$.
We consider a block coding of length $n$. A coding system consists of an encoder
\begin{align*}
\varphi_n:{\cal X}^n \to {\cal M}^{(n)},
\end{align*}
and a decoder 
\begin{align*}
\psi_n: {\cal M}^{(n)} \times {\cal Y}^n \to {\cal Z}^n,
\end{align*}
where ${\cal Z}$ is a reproduction alphabet. 
In the following, we omit the blocklength $n$ when it is obvious from the context. 
Let $d:{\cal X}\times {\cal Y}\times  {\cal Z}\to \mathbb{R}_+$ be a distortion measure,\footnote{Even though it is more common to
 consider a distortion measure $d:{\cal X}\times {\cal Z}\to \mathbb{R}_+$ in the Wyner-Ziv problem, we consider this generalized 
 distortion measure so that the distributed function computation problem can be treated as a special case of the Wyner-Ziv problem.} 
where $\mathbb{R}_+$ is the set of all non-negative real numbers.
For sequences $x^n=(x_1,\ldots,x_n) \in {\cal X}^n$, $y^n=(y_1,\ldots,y_n)$, and $z^n=(z_1,\ldots,z_n)\in {\cal Z}^n$,
we consider the additive distortion given by $d_n(x^n,y^n,z^n) = \sum_{j=1}^n d(x_j,y_j,z_j)$.
For a given code $\Phi_n=(\varphi,\psi)$ and a threshold $D\ge 0$ of distortion, we are interested in the excess distortion probability
\begin{align*}
\Pe(\Phi_n) := \Pr\bigg( \frac{1}{n}d_n(X^n, Y^n, \psi(\varphi(X^n),Y^n)) > D \bigg)
\end{align*}
and non-excess distortion probability
\begin{align*}
\Pc(\Phi_n) := 1 - \Pe(\Phi_n),
\end{align*}
where $(X^n,Y^n)$ are independently identically distributed (i.i.d.) random variables with distribution $P_{XY}$.

For a given distortion level $D\ge 0$, a rate $R$ is defined to be achievable if there exists a sequence of code $\{\Phi_n\}_{n=1}^\infty$
such that 
\begin{align*}
\limsup_{n\to\infty} \frac{1}{n} \log|{\cal M}^{(n)}| \le R
\end{align*} 
and 
\begin{align*}
\lim_{n\to\infty} \Pe(\Phi_n)  = 0.
\end{align*}
Then, the rate-distortion function $R_{\mathtt{WZ}}(D)$ is defined as the infimum of achievable rates.\footnote{It is common
to consider the expected distortion constraint rather than the excess distortion constraint; however, the rate-distortion functions
for the two constraints are equivalent for finite alphabets.}

The rate-distortion function $R_{\mathtt{WZ}}(D)$ was characterized in \cite{wyner:76}.
Let ${\cal U}$ be an auxiliary alphabet such that $|{\cal U}| \le |{\cal X}|+2$, and let 
\begin{align}
R^\star_{\mathtt{WZ}}(D) &:= \min\big[ I(U\wedge X) - I(U\wedge Y)\big] \label{eq:WZ-RD-difference-form} \\
&= \min I(U \wedge X|Y), \label{eq:WZ-RD-conditional-form}
\end{align}
where the minimization is taken over auxiliary random variable $U$ on ${\cal U}$ satisfying the Markov chain 
constraint $U\mc X \mc Y$ such that there exists a reproduction random variable $Z$ on ${\cal Z}$ satisfying the Markov chain
constraint $Z \mc (U,Y) \mc X$ and the distortion constraint $\mathbb{E}[d(X,Y,Z)] \le D$. In fact, in order to characterize the rate-distortion 
function, it suffices to consider $Z$ that is a deterministic function of $(U,Y)$; however, since we consider stochastic function $Z$ later 
in the strong converse exponent, we consider stochastic functions here as well for consistency. Note also that 
the expressions \eqref{eq:WZ-RD-difference-form} and \eqref{eq:WZ-RD-conditional-form} coincide because of $U\mc X \mc Y$.
As we will see below, the two expressions play different roles in the analysis of the strong converse exponent.

\begin{proposition}
For a given distortion level $D\ge 0$, it holds that
\begin{align*}
R_{\mathtt{WZ}}(D) = R^\star_{\mathtt{WZ}}(D).
\end{align*}
\end{proposition}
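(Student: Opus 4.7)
The plan is to prove $R_{\mathtt{WZ}}(D) \le R^\star_{\mathtt{WZ}}(D)$ and $R_{\mathtt{WZ}}(D) \ge R^\star_{\mathtt{WZ}}(D)$ separately, closely following the classical Wyner--Ziv argument \cite{wyner:76}, with minor adjustments to accommodate the generalized distortion measure $d(x,y,z)$ and the stochastic reproduction kernel $P_{Z|UY}$.

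For achievability, I would fix $(P_{U|X}, P_{Z|UY})$ attaining $R^\star_{\mathtt{WZ}}(D)$ to within $\epsilon$ in \eqref{eq:WZ-RD-difference-form}. Then generate $2^{n(I(U\wedge X)+2\epsilon)}$ codewords $U^n(\ell)$ i.i.d.\ under $P_U^n$ and randomly partition them into $2^{n(I(U\wedge X)-I(U\wedge Y)+3\epsilon)}$ bins. The encoder transmits the bin index of any $\ell$ with $(X^n, U^n(\ell))$ jointly typical---one exists with probability tending to one by the covering lemma---and the decoder declares the unique $\ell'$ in the received bin with $(Y^n, U^n(\ell'))$ jointly typical, which by the packing lemma together with $U \mc X \mc Y$ is the intended codeword with probability tending to one. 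Drawing $Z^n$ coordinate-wise from $P_{Z|UY}$ and invoking joint typicality of $(X^n, Y^n, U^n, Z^n)$ then yields $\tfrac{1}{n} d_n(X^n, Y^n, Z^n) \le D + \epsilon$ with probability tending to one, establishing achievability.

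For the weak converse, I would take $\{\Phi_n\}$ with $\Pe(\Phi_n) \to 0$ and $\tfrac{1}{n}\log|{\cal M}^{(n)}| \to R$, write $M = \varphi(X^n)$, $\hat Z^n = \psi(M, Y^n)$, and set $U_j := (M, X_j^-, Y_j^-, Y_j^+)$. Then
\begin{align*}
n R \ge H(M) \ge I(M; X^n \mid Y^n) = \sum_{j=1}^n I(M; X_j \mid X_j^-, Y^n) = \sum_{j=1}^n I(U_j; X_j \mid Y_j),
\end{align*}
where the last equality uses the chain rule plus independence of $(X_j, Y_j)$ from $(X_j^-, Y_j^-, Y_j^+)$; the same independence supplies the Markov chain $U_j \mc X_j \mc Y_j$. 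Since $\hat Z_j$ is a function of $(M, Y^n)$ and hence of $(U_j, Y_j)$, the chain $\hat Z_j \mc (U_j, Y_j) \mc X_j$ is automatic. A uniform time-sharing index $J$ independent of everything, together with $U := (J, U_J)$, $X := X_J$, $Y := Y_J$, $Z := \hat Z_J$, produces a single-letter feasible tuple whose rate is $R$ and whose distortion is at most $D + o(1)$; the support lemma caps $|{\cal U}| \le |{\cal X}| + 2$, and passing to the limit yields $R \ge R^\star_{\mathtt{WZ}}(D)$.

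The main obstacle I anticipate is bridging the excess-distortion probability $\Pe$ to the expected-distortion statement $\tfrac{1}{n}\mathbb{E}[d_n(X^n, Y^n, \hat Z^n)] \le D + o(1)$ that underlies the single-letterization; I would handle this via boundedness of $d$ on the finite alphabet and continuity of $R^\star_{\mathtt{WZ}}(\cdot)$ in $D$, as anticipated in the footnote of the excerpt. A secondary bookkeeping point is that the definition of $R^\star_{\mathtt{WZ}}(D)$ already admits the stochastic kernel $P_{Z|UY}$ that the random coding argument naturally produces, so no reduction to a deterministic function of $(U, Y)$ is required in this proof; and I would use the equivalence of \eqref{eq:WZ-RD-difference-form} and \eqref{eq:WZ-RD-conditional-form}, flagged by the excerpt, to choose whichever form is most convenient at each step.
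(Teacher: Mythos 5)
The paper offers no proof of this Proposition; it is stated as a known fact with a citation to Wyner and Ziv \cite{wyner:76}. Your argument is a correct reconstruction of that classical proof (covering/binning achievability; Csisz\'ar-sum-style single-letterization with $U_j = (M, X_j^-, Y_j^-, Y_j^+)$ on the converse side), straightforwardly adapted to the three-argument distortion measure, the stochastic kernel $P_{Z|UY}$, and the excess-distortion criterion used here, so it matches the approach the paper implicitly relies on.
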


In this paper, we are interested in the exponential behavior of
the non-excess probability when the rate $R$ is strictly smaller than $R_{\mathtt{WZ}}(D)$.
To that end, let 
\begin{align} \label{eq:FnRD}
F^{(n)}(R,D) := \min_{\Phi_n} \frac{1}{n} \log \frac{1}{\Pc(\Phi_n)} 
\end{align}
and
\begin{align*}
F(R,D) := \lim_{n\to\infty} F^{(n)}(R,D),
\end{align*}
where the minimum \eqref{eq:FnRD} is taken over all codes such that $\frac{1}{n}\log|{\cal M}^{(n)}|\le R$.
In fact, it has been shown in \cite{Oohama:18} that $F^{(n)}(R,D)$ is non-increasing function, and thus the limit in the definition of $F(R,D)$ exists.

For auxiliary alphabet ${\cal U}$ such that $|{\cal U}|\le |{\cal X}||{\cal Y}||{\cal Z}|+1$, let 
\begin{align*}
F^\star(R,D) := \min_{P_{\tilde{U}\tilde{X}\tilde{Y}\tilde{Z}}:\atop \mathbb{E}[d(\tilde{X},\tilde{Y},\tilde{Z})]\le D}\big[ D(P_{\tilde{U}\tilde{X}\tilde{Y}\tilde{Z}} \| P_{\tilde{Z}|\tilde{U}\tilde{Y}}P_{\tilde{U}|\tilde{X}}P_{XY})
 + | I(\tilde{U} \wedge \tilde{X}) - I(\tilde{U} \wedge \tilde{Y}) -R |^+\big],
\end{align*}
where the minimization is taken over joint distributions $P_{\tilde{U}\tilde{X}\tilde{Y}\tilde{Z}}$ satisfying the distortion constraint
$\mathbb{E}[d(\tilde{X},\tilde{Y},\tilde{Z})]\le D$, and
$|a|^+ := \max[a,0]$. 
It should be emphasized that we do not impose Markov chain constraints
$\tilde{U}\mc \tilde{X}\mc \tilde{Y}$ nor $\tilde{Z}\mc (\tilde{U},\tilde{Y})\mc \tilde{X}$. Instead, since
\begin{align}
D(P_{\tilde{U}\tilde{X}\tilde{Y}\tilde{Z}} \| P_{\tilde{Z}|\tilde{U}\tilde{Y}}P_{\tilde{U}|\tilde{X}}P_{XY})
&= D(P_{\tilde{X}\tilde{Y}}\|P_{XY}) + D(P_{\tilde{U}|\tilde{X}\tilde{Y}}\|P_{\tilde{U}|\tilde{X}}|P_{\tilde{X}\tilde{Y}}) 
+ D(P_{\tilde{Z}|\tilde{U}\tilde{X}\tilde{Y}}\|P_{\tilde{Z}|\tilde{U}\tilde{Y}}|P_{\tilde{U}\tilde{X}\tilde{Y}}) \nonumber \\
&= D(P_{\tilde{X}\tilde{Y}}\|P_{XY}) + I(\tilde{U} \wedge \tilde{Y}|\tilde{X}) + I(\tilde{Z} \wedge \tilde{X}|\tilde{U},\tilde{Y}), \label{eq:WZ-soft-Markov}
\end{align}
the two Markov constraints are implicitly incorporated into $F^\star(R,D)$
as soft Markov constraints. Since the Markov constraints are satisfied if the soft Markov constraints are zero,
we can verify that $F^\star(R,D)=0$ implies $R \ge R^\star_{\mathtt{WZ}}(D)$, i.e., 
$R < R^\star_{\mathtt{WZ}}(D)$ implies $F^\star(R,D)>0$.

It is also worth mentioning that the mutual information difference form
$I(\tilde{U}\wedge \tilde{X})-I(\tilde{U}\wedge \tilde{Y})$ of the rate-distortion function (cf.~\eqref{eq:WZ-RD-difference-form}) appears
in $F^\star(R,D)$. Since the identity
\begin{align} \label{eq:identity-two-forms}
I(\tilde{U}\wedge \tilde{X}|\tilde{Y}) = I(\tilde{U}\wedge \tilde{Y}|\tilde{X}) + I(\tilde{U}\wedge \tilde{X}) - I(\tilde{U}\wedge \tilde{Y})
\end{align}
holds, the conditional mutual information form $I(\tilde{U}\wedge \tilde{X}|\tilde{Y})$ of the rate-distortion function (cf.~\eqref{eq:WZ-RD-conditional-form})
can be interpreted as the difference form plus the soft Markov constraint. 
Even though the identity \eqref{eq:identity-two-forms} is just a chain rule, the above interpretation has never been made in the literature.
In fact, this interpretation
plays a crucial role in the converse proof of our main result; see also Remark \ref{remark:comparison-to-TyaWat}. 

Now, we are ready to state our main result. 
\begin{theorem}\label{theorem:WZ}
For every $R\ge 0$ and $D\ge 0$, we have
\begin{align*}
F(R,D) = F^\star(R,D).
\end{align*}
\end{theorem}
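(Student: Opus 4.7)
The plan is to prove $F(R,D)\ge F^\star(R,D)$ and $F(R,D)\le F^\star(R,D)$ separately, using the change-of-measure and soft Markov machinery of \cite{TakWat:25,TyaWat20,Oohama:18} for the converse and the Poisson matching construction of \cite{LiAna:21} for the achievability.

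For the converse, I fix a code $\Phi_n=(\varphi,\psi)$ with $\tfrac{1}{n}\log|\mathcal{M}^{(n)}|\le R$, let $\mathcal{A}$ be the non-excess-distortion event, and change measure to the tilted law $Q_{X^n Y^n}(\cdot):=P_{XY}^n(\cdot\mid\mathcal{A})$, augmented deterministically by $M=\varphi(X^n)$ and $Z^n=\psi(M,Y^n)$; under $Q$ the code is error-free and the budget $-\log\Pc(\Phi_n)=D(Q_{X^n Y^n}\|P_{XY}^n)$ is at my disposal. I take the WZ auxiliary $U_j=(M,Y_j^-)$ with time-sharing $J\sim\mathrm{Unif}\{1,\ldots,n\}$ and set $\tilde X=X_J,\tilde Y=Y_J,\tilde Z=Z_J,\tilde U=(U_J,J)$. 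The rate is processed via the difference form
\begin{align*}
nR\;\ge\;H_Q(M)\;\ge\;I_Q(M;X^n)-I_Q(M;Y^n),
\end{align*}
which single-letterizes through the Csisz\'ar sum identity to $n[I(\tilde U\wedge\tilde X)-I(\tilde U\wedge\tilde Y)]$ \emph{without} invoking $\tilde U\mc\tilde X\mc\tilde Y$. The divergence $D(Q_{X^n Y^n}\|P_{XY}^n)$ in turn single-letterizes by convexity to $nD(P_{\tilde X\tilde Y}\|P_{XY})$, and the two soft Markov terms in \eqref{eq:WZ-soft-Markov} are recovered as follows: the identity \eqref{eq:identity-two-forms} says that the residual $I_Q(M;Y^n\mid X^n)$, which would be \emph{discarded} when the conditional form is used in place of the difference form, is exactly the first soft Markov penalty $nI(\tilde U\wedge\tilde Y\mid\tilde X)$, and it is kept in the exponent with multiplier one in the spirit of \cite{TakWat:25}; the second soft Markov term $I(\tilde Z\wedge\tilde X\mid\tilde U,\tilde Y)$ arises because $Z_j$ is a function of $(M,Y^n)$ but need not be a function of $(U_j,Y_j)$ alone. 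Combining the two bounds, applying $|\cdot|^+$ when $R$ lies below the difference form, minimizing over the induced single-letter $P_{\tilde U\tilde X\tilde Y\tilde Z}$, and capping $|\mathcal{U}|$ by a support-lemma argument yields $\tfrac{1}{n}\log\tfrac{1}{\Pc(\Phi_n)}\ge F^\star(R,D)-o(1)$.

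For the achievability, I fix a minimizer $P_{\tilde U\tilde X\tilde Y\tilde Z}$ of $F^\star(R,D)$, which in general violates both Markov constraints. I generate a Poisson point process of $u^n$ sequences with intensity $P_{\tilde U}^n$, marked by uniform labels in $[2^{nR}]$, and let the encoder, on observing $x^n$, select the Poisson point minimizing its Poisson mark divided by the likelihood ratio $P_{\tilde U|\tilde X}^n(u^n\mid x^n)/P_{\tilde U}^n(u^n)$, transmit the resulting index $m$, and let the decoder run the symmetric Poisson matching against $P_{\tilde U|\tilde Y}^n(u^n\mid y^n)/P_{\tilde U}^n(u^n)$ restricted to the bin of $m$; the decoder then outputs $z^n$ drawn from $P_{\tilde Z|\tilde U\tilde Y}^n(\cdot\mid u^n,y^n)$. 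The Poisson matching lemma of \cite{LiAna:21} gives sharp single-shot lower bounds on the joint correct-decoding and distortion-satisfaction event. Under a $P_{\tilde X\tilde Y}$-typical source realization the probability pays $D(P_{\tilde X\tilde Y}\|P_{XY})$; the bin-matching mismatch at the decoder pays $|I(\tilde U\wedge\tilde X)-I(\tilde U\wedge\tilde Y)-R|^+$; the non-Markovity of $P_{\tilde U|\tilde X}$ in $\tilde Y$ surfaces in the Poisson likelihood ratio as a reverse-direction deviation $I(\tilde U\wedge\tilde Y\mid\tilde X)$; and the reproduction step pays $I(\tilde Z\wedge\tilde X\mid\tilde U,\tilde Y)$ because $P_{\tilde Z|\tilde U\tilde Y}$ is applied to $(u^n,y^n)$ alone. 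These four contributions add up to $F^\star(R,D)$.

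The principal obstacle is handling both soft Markov constraints simultaneously in the achievability: the forward-direction construction of \cite{TakWat:25} yields one penalty cleanly, but stacking a second stochastic layer by the same mechanism double-counts the likelihood ratios. The resolution is to route both penalties into a single Radon--Nikod\'ym derivative evaluated at the selected Poisson point, so that they emerge together from one reverse-direction Poisson matching inequality; verifying the termwise accounting and the absence of spurious cross terms will be the main technical effort. A secondary subtlety on the converse side is invoking the identity \eqref{eq:identity-two-forms} at the right stage of single-letterization, so that neither the difference nor the conditional form of the rate-distortion function leaves a gap between the upper and lower bounds on the exponent.
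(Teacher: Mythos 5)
Your overall architecture matches the paper's: change of measure and two coupled single-letterizations tied together through \eqref{eq:identity-two-forms} on the converse side, and Poisson/exponential matching on the achievability side, with the correct list of four contributions to the exponent. The genuine gap is in your achievability accounting. In the paper, the Poisson matching likelihood ratio $P_{U^n|X^n}(\check u^n\mid x^n)/\bigl(|{\cal M}|\,Q_{U^n|Y^n}(\check u^n\mid y^n)\bigr)$ produces \emph{only} the term $|I(\bar U\wedge\bar X)-I(\bar U\wedge\bar Y)-R|^+$; it does not produce $I(\bar U\wedge\bar Y\mid\bar X)$. That first soft Markov penalty comes from a separate estimate, \eqref{eq:probability-lower-bound-UXY}: since the matching lemma guarantees that $\check U^n$ is distributed \emph{exactly} as $P_{U^n|X^n}(\cdot\mid X^n)$, the probability that $(\check U^n,X^n,Y^n)$ lands in the target joint type is $\doteq\exp\{-nI(\bar U\wedge\bar Y\mid\bar X)\}$. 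The second soft Markov penalty $I(\bar Z\wedge\bar X\mid\bar U,\bar Y)$ comes from yet another, independent type-class estimate, \eqref{eq:lower-probability-conditional-type-class-ZUXY}, on the probability that $Z^n\sim Q_{Z^n|U^nY^n}(\cdot\mid\check U^n,Y^n)$ lands in ${\cal T}_{\bar Z|\bar U\bar X\bar Y}(\check U^n,X^n,Y^n)$. These two probabilities simply multiply onto the matching agreement probability; there is no double-counting of likelihood ratios and nothing that needs to be routed into a single Radon--Nikod\'ym derivative. Your ``principal obstacle'' and its proposed resolution therefore address a problem the paper does not have, while leaving unexplained the actual mechanism by which the two soft Markov terms enter. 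The subtle point the paper isolates (Remark \ref{remark:deviation-Markovity}) is that both penalties are deviations of Markovity in the \emph{reverse} direction, arising from exact-sampling-plus-typicality rather than from conditional divergences in the forward direction as in \cite{TakWat:25}, and that exact (not approximate) sampling is essential, which is precisely what Poisson matching buys over soft covering.

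On the converse, a smaller but real issue is your auxiliary $U_j=(M,Y_j^-)$. The single-letterization needs the Csisz\'ar sum identity in the form
\begin{align*}
H(\tilde X^n|\tilde M)-H(\tilde Y^n|\tilde M)=\sum_{j=1}^n\bigl[H(\tilde X_j|\tilde M,\tilde X_j^-,\tilde Y_j^+)-H(\tilde Y_j|\tilde M,\tilde X_j^-,\tilde Y_j^+)\bigr],
\end{align*}
and it also needs $H(\tilde X_j|\tilde M,\tilde X_j^-,\tilde Y_j^+,\tilde Y_j,\tilde Z_j)$ to show up after dropping $\tilde Y_j^-$, so the choice is forced to $\tilde U_j=(\tilde M,\tilde X_j^-,\tilde Y_j^+)$; with $(M,Y_j^-)$ the two rate bounds will not single-letterize to a common auxiliary. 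Also note $I_Q(\tilde M\wedge\tilde Y^n\mid\tilde X^n)=0$ since $\tilde M$ is a function of $\tilde X^n$; it is not literally ``kept'' as a penalty. The nonzero single-letter soft Markov term $I(\tilde U_J,J\wedge\tilde Y_J\mid\tilde X_J)$ emerges only after single-letterizing this vanishing multi-letter quantity alongside the divergence, which is exactly why the paper performs two separate single-letterizations (one for each of \eqref{eq:WZ-multiletter-bound-1} and \eqref{eq:WZ-multiletter-bound-2}) and couples them through \eqref{eq:three-variable-identity} with the same $\tilde U_J$.
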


\begin{remark}
Even though it is not immediately clear if the characterization $F^\star(R,D)$ is a convex function of $(R,D)$,
we can verify that it is indeed a convex function via the operationally defined quantity $F(R,D)$.
In fact, since the non-excess probability of a concatenation of two codes with non-overlapping blocks is
the product of the non-excess probability of each code, the convexity of the exponent can be proved by the time-sharing 
argument.\footnote{Note that the same argument cannot be used to prove 
the convexity of error exponent.} Fore more detail of this argument, see \cite[Property 1]{Oohama15b}.
\end{remark}

\begin{remark}\label{remark:optimizer}
In the optimization of $F^\star(R,D)$, the term $I(\tilde{U}\wedge \tilde{X})-I(\tilde{U}\wedge \tilde{Y})$
may be negative for some choice of $P_{\tilde{U}\tilde{X}\tilde{Y}}$. However, we can show that such a
$P_{\tilde{U}\tilde{X}\tilde{Y}\tilde{Z}}$ is not an optimizer of $F^\star(R,D)$. In fact, if such a $P_{\tilde{U}\tilde{X}\tilde{Y}\tilde{Z}}$ is an optimizer,
then we have
\begin{align*}
F^\star(R,D) &= D(P_{\tilde{X}\tilde{Y}}\|P_{XY}) + I(\tilde{U}\wedge \tilde{Y}|\tilde{X}) + I(\tilde{Z}\wedge \tilde{X}|\tilde{U},\tilde{Y}) \\
&>   D(P_{\tilde{X}\tilde{Y}}\|P_{XY}) + I(\tilde{U}\wedge \tilde{Y}|\tilde{X}) + I(\tilde{Z}\wedge \tilde{X}|\tilde{U},\tilde{Y}) 
 + I(\tilde{U}\wedge \tilde{X}) - I(\tilde{U}\wedge \tilde{Y}) \\
&= D(P_{\tilde{X}\tilde{Y}}\|P_{XY}) + I(\tilde{U}\wedge \tilde{X}|\tilde{Y}) + I(\tilde{Z}\wedge \tilde{X}|\tilde{U},\tilde{Y}) \\
&\ge F(R,D) \\
&=F^\star(R,D),
\end{align*}
which is a contradiction. In the second inequality above, we used a naive achievability bound \eqref{eq:naive-achievability} in Remark \ref{ramark:naive-achievability};
we do not know if it can be proved just by an analysis of information quantities.
\end{remark}

\begin{corollary}[Distributed function computation] \label{corollary:dfc}
When the distortion is given by
\begin{align*}
d(x,y,z) = \bol{1}[z=f(x,y)]
\end{align*}
for a function $f:{\cal X}\times {\cal Y}\to {\cal Z}$ and $D=0$, then the problem reduces to the distributed 
function computation problem. In this case, the expression of the exponent can be written as\footnote{Rigorously, 
the case $D=0$ requires a separate proof for the achievability; see Remark \ref{remark:proof-dfc}.} 
\begin{align}
F^\star(R,0) = \min_{P_{\tilde{U}\tilde{X}\tilde{Y}\tilde{Z}} : \atop \tilde{Z}=f(\tilde{X},\tilde{Y})}
\big[
D(P_{\tilde{U}\tilde{X}\tilde{Y}} \| P_{\tilde{U}|\tilde{X}}P_{XY}) + H(\tilde{Z}|\tilde{U},\tilde{Y})
 + | I(\tilde{U} \wedge \tilde{X}) - I(\tilde{U} \wedge \tilde{Y}) -R |^+
\big]. \label{eq:expression-function-computation}
\end{align}
\end{corollary}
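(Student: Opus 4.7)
The corollary follows by algebraic simplification of the optimization in Theorem \ref{theorem:WZ} specialized to the indicator distortion with $D=0$, together with a separate achievability argument at $D=0$ deferred to Remark \ref{remark:proof-dfc}. My plan is to substitute the structural constraint imposed by the distortion condition into $F^\star(R,0)$ and to collapse the three summands produced by the chain-rule decomposition \eqref{eq:WZ-soft-Markov}.

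Since $d$ is a nonnegative indicator of disagreement with $f$, the constraint $\mathbb{E}[d(\tilde{X},\tilde{Y},\tilde{Z})]\le 0$ forces $\tilde{Z}=f(\tilde{X},\tilde{Y})$ almost surely under any feasible $P_{\tilde{U}\tilde{X}\tilde{Y}\tilde{Z}}$. Starting from \eqref{eq:WZ-soft-Markov},
\begin{align*}
D(P_{\tilde{U}\tilde{X}\tilde{Y}\tilde{Z}} \| P_{\tilde{Z}|\tilde{U}\tilde{Y}}P_{\tilde{U}|\tilde{X}}P_{XY}) = D(P_{\tilde{X}\tilde{Y}}\|P_{XY}) + I(\tilde{U}\wedge \tilde{Y}|\tilde{X}) + I(\tilde{Z}\wedge \tilde{X}|\tilde{U},\tilde{Y}),
\end{align*}
the first two summands combine by the chain rule of divergence into $D(P_{\tilde{U}\tilde{X}\tilde{Y}}\|P_{\tilde{U}|\tilde{X}}P_{XY})$. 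For the third, the fact that $\tilde{Z}$ is a deterministic function of $(\tilde{X},\tilde{Y})$ yields $H(\tilde{Z}|\tilde{U},\tilde{X},\tilde{Y})=0$, hence
\begin{align*}
I(\tilde{Z}\wedge \tilde{X}|\tilde{U},\tilde{Y}) = H(\tilde{Z}|\tilde{U},\tilde{Y}) - H(\tilde{Z}|\tilde{U},\tilde{X},\tilde{Y}) = H(\tilde{Z}|\tilde{U},\tilde{Y}).
\end{align*}
Substituting these identities into the objective of $F^\star(R,0)$ and reparameterizing the minimization over $P_{\tilde{U}\tilde{X}\tilde{Y}}$ with $\tilde{Z}=f(\tilde{X},\tilde{Y})$ yields exactly \eqref{eq:expression-function-computation}. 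The cardinality bound on $\mathcal{U}$ can be inherited from Theorem \ref{theorem:WZ} (possibly tightened using the support-reduction lemma once $\tilde{Z}$ is pinned to $f(\tilde{X},\tilde{Y})$), but this is not needed for the statement.

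The only non-cosmetic subtlety is the achievability at $D=0$: the Poisson-matching construction used for the direct part of Theorem \ref{theorem:WZ} typically produces a reproduction with small but positive expected distortion, and therefore does not directly certify exact computation. I expect the argument in Remark \ref{remark:proof-dfc} to resolve this either by a continuity / limiting argument — pushing $D\downarrow 0$ along near-optimizers of $F^\star(R,D)$ and invoking continuity of the objective in $D$ — or by post-processing the codebook so that the decoder outputs $f(\tilde{X},\tilde{Y})$ deterministically once a suitable $\tilde{U}$-description has been recovered. The converse inequality $F(R,0)\ge F^\star(R,0)$ is immediate from Theorem \ref{theorem:WZ}, so the simplified expression \eqref{eq:expression-function-computation} is unaffected by this issue.
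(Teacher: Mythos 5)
Your proposal is correct and follows essentially the same route as the paper: both substitute $H(\tilde{Z}|\tilde{X},\tilde{Y})=0$ (forced by $\mathbb{E}[d(\tilde{X},\tilde{Y},\tilde{Z})]=0$) into the decomposition \eqref{eq:WZ-soft-Markov} to get $I(\tilde{Z}\wedge\tilde{X}|\tilde{U},\tilde{Y})=H(\tilde{Z}|\tilde{U},\tilde{Y})$ and then merge the remaining divergence terms, with the $D=0$ achievability handled separately as in Remark \ref{remark:proof-dfc}. (Minor note: the remark does not invoke a continuity-in-$D$ limit; it reruns the type-based construction with $\bar{Z}=f(\bar{X},\bar{Y})$ baked into the joint type so that $\mathcal{T}_{\bar{Z}|\bar{U}\bar{X}\bar{Y}}$ is a singleton, closer to the second alternative you sketch.)
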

\begin{proof}
Since $\mathbb{E}[d(\tilde{X},\tilde{Y},\tilde{Z})]=0$ implies $H(\tilde{Z}|\tilde{X},\tilde{Y})=0$, we have
\begin{align*}
I(\tilde{Z}\wedge \tilde{X}|\tilde{U},\tilde{Y}) = H(\tilde{Z}|\tilde{U},\tilde{Y}) - H(\tilde{Z}|\tilde{U},\tilde{X},\tilde{Y}) = H(\tilde{Z}|\tilde{U},\tilde{Y}).
\end{align*}
Then, the expression \eqref{eq:expression-function-computation} follows from Theorem \ref{theorem:WZ} and \eqref{eq:WZ-soft-Markov}.
\end{proof}

\begin{corollary}[Slepian-Wolf coding]
When the distortion is given by
\begin{align*}
d(x,y,z) = \bol{1}[z=x]
\end{align*}
and $D=0$, then the problem reduces to the lossless source coding with full side-information, also known as 
the Slepian-Wolf coding. In this case, $F^\star(R,0)$ can be written as
\begin{align*}
F_{\mathtt{SW}}^\star(R) := \min_{P_{\tilde{X}\tilde{Y}}} \big[ D(P_{\tilde{X}\tilde{Y}}\|P_{XY}) + | H(\tilde{X}|\tilde{Y})-R|^+ \big].
\end{align*}
\end{corollary}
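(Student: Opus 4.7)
The plan is to start from Corollary~\ref{corollary:dfc} specialized to $f(x,y)=x$, so that $\tilde{Z}=\tilde{X}$ and $H(\tilde{Z}|\tilde{U},\tilde{Y})=H(\tilde{X}|\tilde{U},\tilde{Y})$, and then optimize out the auxiliary random variable $\tilde{U}$ to expose the form $F_{\mathtt{SW}}^\star(R)$. By the chain rule of divergence (the same kind of decomposition used for the first two terms of \eqref{eq:WZ-soft-Markov}),
\begin{align*}
D(P_{\tilde{U}\tilde{X}\tilde{Y}}\|P_{\tilde{U}|\tilde{X}}P_{XY}) = D(P_{\tilde{X}\tilde{Y}}\|P_{XY}) + I(\tilde{U}\wedge \tilde{Y}|\tilde{X}),
\end{align*}
so the objective in \eqref{eq:expression-function-computation} splits into the source-only term $D(P_{\tilde{X}\tilde{Y}}\|P_{XY})$ plus a $\tilde{U}$-dependent remainder $I(\tilde{U}\wedge \tilde{Y}|\tilde{X})+H(\tilde{X}|\tilde{U},\tilde{Y})+|I(\tilde{U}\wedge \tilde{X})-I(\tilde{U}\wedge \tilde{Y})-R|^+$.

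Next I would set $\delta := I(\tilde{U}\wedge \tilde{X}) - I(\tilde{U}\wedge \tilde{Y})$ and use the identity \eqref{eq:identity-two-forms} together with the entropy chain rule $I(\tilde{U}\wedge \tilde{X}|\tilde{Y})+H(\tilde{X}|\tilde{U},\tilde{Y})=H(\tilde{X}|\tilde{Y})$ to collapse the entire $\tilde{U}$-dependent remainder to the one-variable expression
\begin{align*}
H(\tilde{X}|\tilde{Y}) - \delta + |\delta - R|^+.
\end{align*}
Rewriting $\delta = I(\tilde{U}\wedge \tilde{X}|\tilde{Y}) - I(\tilde{U}\wedge \tilde{Y}|\tilde{X})$ then reveals the a priori bound $\delta \le H(\tilde{X}|\tilde{Y})$. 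A one-line case split on $\delta\ge R$ versus $\delta<R$ shows that $-\delta+|\delta-R|^+$ is non-increasing in $\delta$ on the admissible range with minimum value $-\min(R,H(\tilde{X}|\tilde{Y}))$, so the remainder is bounded below by $|H(\tilde{X}|\tilde{Y})-R|^+$ for every choice of $\tilde{U}$.

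To verify this bound is tight, the choice $\tilde{U}=\tilde{X}$ produces $\delta=H(\tilde{X})-I(\tilde{X}\wedge \tilde{Y})=H(\tilde{X}|\tilde{Y})$ and simultaneously zeroes out $I(\tilde{U}\wedge \tilde{Y}|\tilde{X})$ and $H(\tilde{X}|\tilde{U},\tilde{Y})$, delivering exactly $|H(\tilde{X}|\tilde{Y})-R|^+$; minimizing over $P_{\tilde{X}\tilde{Y}}$ then yields $F^\star(R,0)=F_{\mathtt{SW}}^\star(R)$. There is no substantive obstacle here: the only nontrivial observation is that, thanks to \eqref{eq:identity-two-forms}, the full $\tilde{U}$-dependence of the objective reduces to the scalar parameter $\delta$, and the extremal deterministic choice $\tilde{U}=\tilde{X}$ realizes $\delta=H(\tilde{X}|\tilde{Y})$. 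The calculation serves as a sanity check that Corollary~\ref{corollary:dfc} recovers the familiar Csisz\'ar--K\"orner Slepian--Wolf strong converse exponent.
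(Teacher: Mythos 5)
Your proof is correct. It is essentially the same argument as the paper's, packaged a little differently: the paper works directly from the general expression \eqref{eq:WZ-soft-Markov} for $F^\star(R,0)$, proves two separate lower bounds on the objective (one via $|a|^+\ge a$ giving $D(P_{\tilde{X}\tilde{Y}}\|P_{XY})+H(\tilde{X}|\tilde{Y})-R$ after the identity \eqref{eq:identity-two-forms} and $H(\tilde{X}|\tilde{U},\tilde{Y},\tilde{Z})=0$, the other via $|a|^+\ge 0$ giving $D(P_{\tilde{X}\tilde{Y}}\|P_{XY})$), and takes their maximum; you instead route through Corollary~\ref{corollary:dfc} with $f(x,y)=x$, collapse the $\tilde{U}$-dependence to the scalar $\delta$, and do a case split. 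Both use the same three ingredients — the identity \eqref{eq:identity-two-forms}, the chain rule $I(\tilde{U}\wedge\tilde{X}|\tilde{Y})+H(\tilde{X}|\tilde{U},\tilde{Y})=H(\tilde{X}|\tilde{Y})$, and the explicit choice $\tilde{U}=\tilde{Z}=\tilde{X}$ for tightness — and yield identical bounds; the a priori bound $\delta\le H(\tilde{X}|\tilde{Y})$ that you invoke plays the role of the paper's trivial second lower bound. Your $\delta$-parametrization is a mildly cleaner way of presenting why $|H(\tilde{X}|\tilde{Y})-R|^+$ (rather than $H(\tilde{X}|\tilde{Y})-R$) is the right answer, at the cost of an extra step through Corollary~\ref{corollary:dfc}.
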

\begin{proof}
We can verify $F^\star(R,0)=F_{\mathtt{SW}}^\star(R)$ as follows. First, by setting $\tilde{U}=\tilde{Z}=\tilde{X}$, the expression 
$F^\star(R,0)$ becomes $F_{\mathtt{SW}}^\star(R)$, i.e., $F_{\mathtt{SW}}^\star(R) \ge F^\star(R,0)$.
To verify the opposite inequality, for any $P_{\tilde{U}\tilde{X}\tilde{Y}\tilde{Z}}$, 
we have
 \begin{align*}
 \lefteqn{
D(P_{\tilde{X}\tilde{Y}}\|P_{XY}) + I(\tilde{U} \wedge \tilde{Y}|\tilde{X}) + I(\tilde{Z} \wedge \tilde{X}|\tilde{U},\tilde{Y}) + | I(\tilde{U} \wedge \tilde{X}) - I(\tilde{U}\wedge \tilde{Y})-R|^+
} \\
&\ge D(P_{\tilde{X}\tilde{Y}}\|P_{XY}) + I(\tilde{U}\wedge \tilde{Y}|\tilde{X}) + I(\tilde{Z}\wedge \tilde{X}|\tilde{U},\tilde{Y}) +  I(\tilde{U}\wedge \tilde{X}) - I(\tilde{U}\wedge \tilde{Y})-R \\
&= D(P_{\tilde{X}\tilde{Y}}\|P_{XY})+ I(\tilde{Z}\wedge \tilde{X}|\tilde{U},\tilde{Y}) + I(\tilde{U}\wedge \tilde{X}|\tilde{Y}) - R \\
&= D(P_{\tilde{X}\tilde{Y}}\|P_{XY})+ H(\tilde{X}|\tilde{U},\tilde{Y}) - H(\tilde{X}|\tilde{U},\tilde{Y}, \tilde{Z}) + H(\tilde{X}|\tilde{Y}) - H(\tilde{X}|\tilde{U},\tilde{Y}) - R \\
&= D(P_{\tilde{X}\tilde{Y}}\|P_{XY})+ H(\tilde{X}|\tilde{Y}) - R,
\end{align*} 
where we used $H(\tilde{X}|\tilde{U},\tilde{Y}, \tilde{Z})=0$, which follows from $\mathbb{E}[d(\tilde{X},\tilde{Y},\tilde{Z})]=0$, in the last equality. On the other hand, we also have
\begin{align*}
D(P_{\tilde{X}\tilde{Y}}\|P_{XY}) + I(\tilde{U} \wedge \tilde{Y}|\tilde{X}) + I(\tilde{Z} \wedge \tilde{X}|\tilde{U},\tilde{Y}) + | I(\tilde{U} \wedge \tilde{X}) - I(\tilde{U}\wedge \tilde{Y})-R|^+
\ge D(P_{\tilde{X}\tilde{Y}}\|P_{XY}).
\end{align*}
Thus, we have $F^\star(R,0) \ge F_{\mathtt{SW}}^\star(R)$.
\end{proof}

\begin{figure}[tb]
\centering{
\includegraphics[width=0.5\linewidth]{./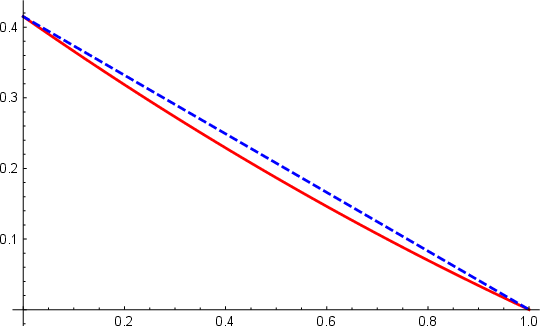}
\caption{A comparison of \eqref{eq:time-sharing-exponent-AND} (blue dashed line) and 
\eqref{eq:coded-exponent-AND} (red solid line).
The horizontal axis is $0 \le R \le 1$ and the vertical axis is the correct probability exponent.}
\label{Fig:AND-comparison}
}
\end{figure}

\begin{example}[AND function for independent sources] \label{example-AND}
Let ${\cal X}={\cal Y}={\cal Z}=\{0,1\}$, and let $f(x,y)=x \wedge y$ be the AND function. 
Consider the uniform distribution $P_{XY}(x,y)=\frac{1}{4}$ for every $(x,y)\in {\cal X}\times {\cal Y}$.
The optimal rate needed to reproduce $Z=f(X,Y)$ is known to be $H(X)=1$, i.e, the encoder need to send the entire source;
eg.~see \cite{HanKob87}. When the rate is $0 < R < 1$, the encoder can send the entire source for the first $Rn$ bit;
the decoder guesses the value of function as $0$ for the remaining $(1-R)n$ bits; then the correct probability of this time-sharing scheme $\Phi_n^{\mathtt{TS}}$
is
\begin{align*}
\Pc(\Phi_n^{\mathtt{TS}}) = \bigg(\frac{3}{4}\bigg)^{(1-R)n},
\end{align*}
i.e., the correct probability exponent is
\begin{align}
\frac{1}{n}\log \frac{1}{\Pc(\Phi_n^{\mathtt{TS}})} = (1-R)(2 - \log 3). \label{eq:time-sharing-exponent-AND}
\end{align}
Since the sources $(X,Y)$ are independent, one may wonder there is no scheme that is superior to \eqref{eq:time-sharing-exponent-AND}.
Perhaps surprisingly, our optimal exponent satisfies 
\begin{align} \label{eq:coded-exponent-AND}
F^\star(R,0) \le 2 - h\bigg(\frac{2+R}{6}\bigg) - \frac{2+R}{3}.
\end{align}
Note that the strict concavity of $h(\cdot)$ implies that the right-hand side of \eqref{eq:coded-exponent-AND} is strictly convex.
Thus, since \eqref{eq:time-sharing-exponent-AND} and \eqref{eq:coded-exponent-AND} coincide at $R=0$ and $R=1$, 
\eqref{eq:coded-exponent-AND} is strictly smaller than \eqref{eq:time-sharing-exponent-AND} for $0 < R < 1$; see Fig.~\ref{Fig:AND-comparison}.

To verify \eqref{eq:coded-exponent-AND}, for ${\cal U}=\{0,1,2\}$, we set the joint distribution $P_{\tilde{U}\tilde{X}\tilde{Y}}$ 
as follows: let $P_{\tilde{U}}(0)=P_{\tilde{U}}(1)=\frac{R}{2}$, and $P_{\tilde{U}}(2)=(1-R)$; then, let
\begin{align*}
P_{\tilde{X}\tilde{Y}|\tilde{U}}(\cdot,\cdot|0) &= \left[
\begin{array}{cc}
\frac{1}{2} & \frac{1}{2} \\
0 & 0
\end{array}
\right], \\
P_{\tilde{X}\tilde{Y}|\tilde{U}}(\cdot,\cdot|1) &= \left[
\begin{array}{cc}
0 & 0 \\
\frac{1}{2} & \frac{1}{2} 
\end{array}
\right], \\
P_{\tilde{X}\tilde{Y}|\tilde{U}}(\cdot,\cdot|2) &= \left[
\begin{array}{cc}
\frac{1}{3} & \frac{1}{3} \\
\frac{1}{3} & 0 
\end{array}
\right].
\end{align*}
Note that
\begin{align*}
P_{\tilde{X}\tilde{Y}} = R \cdot
\left[
\begin{array}{cc}
\frac{1}{4} & \frac{1}{4} \\
\frac{1}{4} & \frac{1}{4}
\end{array}
\right]
+ (1-R) \cdot 
\left[
\begin{array}{cc}
\frac{1}{3} & \frac{1}{3} \\
\frac{1}{3} & 0
\end{array}
\right]
= \left[
\begin{array}{cc}
\frac{4-R}{12} & \frac{4-R}{12} \\
\frac{4-R}{12} & \frac{R}{4}
\end{array}
\right].
\end{align*}
For this choice of joint distribution, we can compute
\begin{align}
I(\tilde{U}\wedge \tilde{Y}|\tilde{X}) &= \frac{4-R}{6} + \frac{2+R}{6} h\bigg( \frac{4-R}{4+2R} \bigg) - \frac{2+R}{3}, \label{eq:Markov-constraint-AND} \\
H(\tilde{Z}|\tilde{U},\tilde{Y}) &= 0, \\
I(\tilde{U}\wedge \tilde{X}) - I(\tilde{U}\wedge \tilde{Y}) &= R.
\end{align}
Thus, by substituting these quantities into \eqref{eq:expression-function-computation}, we have
\begin{align*}
 2 - H\bigg(\frac{4-R}{12},\frac{4-R}{12}, \frac{4-R}{12}, \frac{R}{4} \bigg) + \frac{4-R}{6} + \frac{2+R}{6} h\bigg( \frac{4-R}{4+2R} \bigg) - \frac{2+R}{3} 
= 2 - h\bigg(\frac{2+R}{6}\bigg) - \frac{2+R}{3}.
\end{align*}

By noting the inequality $h(\theta) > 1-2 \theta$ for $\frac{1}{2}< \theta < 1$, we can verify that
the constraint \eqref{eq:Markov-constraint-AND} satisfies $I(\tilde{U}\wedge \tilde{Y}|\tilde{X}) > 0$ for $0 < R < 1$.
In fact, we can show that the optimizer $P_{\tilde{U}\tilde{X}\tilde{Y}}$ of \eqref{eq:expression-function-computation} 
for this example must satisfy either $I(\tilde{U}\wedge \tilde{Y}|\tilde{X})>0$ or $H(\tilde{Z}|\tilde{U},\tilde{Y})>0$.
To verify this, first note that the optimizer $P_{\tilde{U}\tilde{X}\tilde{Y}}$ must be such that $\mathtt{supp}(P_{\tilde{X}\tilde{Y}})={\cal X}\times {\cal Y}$;
otherwise, $|\mathtt{supp}(P_{\tilde{X}\tilde{Y}})| \le 3$ and $F^\star(R,0)$ cannot be smaller than
\begin{align*}
D(P_{\tilde{X}\tilde{Y}}\| P_{XY}) = 2 - H(\tilde{X},\tilde{Y}) \ge 2 - \log 3.
\end{align*}
When $\mathtt{supp}(P_{\tilde{X}\tilde{Y}})={\cal X}\times {\cal Y}$, we can show that $I(\tilde{U} \wedge \tilde{Y}|\tilde{X})=0$
and $H(\tilde{Z}|\tilde{U},\tilde{Y})=0$ imply $H(\tilde{X}|\tilde{U})=0$ in the same manner as \cite[Lemma 1]{HanKob87}.
In fact, if $H(\tilde{X}|\tilde{U})>0$, the there exists some $u\in {\cal U}$ such that $P_{\tilde{U}\tilde{X}}(u,0)>0$
and $P_{\tilde{U}\tilde{X}}(u,1)>0$. Since $P_{\tilde{X}\tilde{Y}}$ has full support and $\tilde{U} \mc \tilde{X} \mc \tilde{Y}$, we have
\begin{align*}
P_{\tilde{U}\tilde{Z}\tilde{Y}}(u,0,1) &= P_{\tilde{U}\tilde{X}\tilde{Y}}(u,0,1) > 0, \\
P_{\tilde{U}\tilde{Z}\tilde{Y}}(u,1,1) &= P_{\tilde{U}\tilde{X}\tilde{Y}}(u,1,1) > 0.
\end{align*}
Thus, we have $H(\tilde{Z}|\tilde{U}=u,\tilde{Y}=1)>0$, which contradict $H(\tilde{Z}|\tilde{U},\tilde{Y})=0$.
Consequently, if the optimizer $P_{\tilde{U}\tilde{X}\tilde{Y}}$ is such that $I(\tilde{U} \wedge \tilde{Y}|\tilde{X})=0$ and $H(\tilde{Z}|\tilde{U},\tilde{Y})=0$,
then $F^\star(R,0)$ cannot be smaller than
\begin{align*}
D(P_{\tilde{X}\tilde{Y}}\| P_{XY}) + | H(\tilde{X}|\tilde{Y}) - R|^+ 
&\ge D(P_{\tilde{X}\tilde{Y}}\| P_{XY}) + H(\tilde{X}|\tilde{Y}) - R \\
&= 2 - H(\tilde{Y}) - R \\
&\ge 1- R.
\end{align*}
This guarantees that at least one of the soft Markov constraints must be strictly positive to attain the optimal 
value of $F^\star(R,0)$.
\end{example}

\section{Proof of Converse}

In this section, we prove the converse part of Theorem \ref{theorem:WZ}.
Our proof is based on the change-of-measure argument.
We follow, in part, similar steps as the strong converse proof in \cite{TyaWat20};
however, to derive the tight exponent, we need more judicious argument; see Remark \ref{remark:comparison-to-TyaWat}.

For a given code $\Phi_n=(\varphi,\psi)$, let
\begin{align*}
{\cal C} = \big\{ (x^n,y^n): d_n(x^n, y^n,\psi(\varphi(x^n),y^n)) \le n D \big\}
\end{align*}
be the set of all sequences that do not cause the excess distortion, and let $P_{\tilde{X}^n\tilde{Y}^n}$
be the probability distribution obtained by conditioning $P_{X^n Y^n}$ on ${\cal C}$, i.e.,
\begin{align*}
P_{\tilde{X}^n\tilde{Y}^n}(x^n,y^n) := \frac{P_{X^nY^n}(x^n,y^n)\bol{1}[(x^n,y^n) \in {\cal C}]}{P_{X^nY^n}({\cal C})}.
\end{align*}
Then, we can verify
\begin{align} \label{eq:WZ-changed-measure-divergence}
D(P_{\tilde{X}^n\tilde{Y}^n} \| P_{X^nY^n}) = \log \frac{1}{\Pc(\Phi_n)}.
\end{align}

For random variables $(\tilde{X}^n,\tilde{Y}^n)$ distributed according to $P_{\tilde{X}^n\tilde{Y}^n}$, let $\tilde{M}= \varphi(\tilde{X}^n)$
and $\tilde{Z}^n = \psi(\tilde{M}, \tilde{Y}^n)$. Since $\tilde{M} \mc \tilde{X}^n \mc \tilde{Y}^n$ and 
\begin{align*}
0 = H(\tilde{Z}^n|\tilde{M},\tilde{Y}^n) = I(\tilde{Z}^n \wedge \tilde{X}^n|\tilde{M},\tilde{Y}^n)
\end{align*}
hold,
the identity \eqref{eq:WZ-changed-measure-divergence} implies
\begin{align} \label{eq:WZ-multiletter-bound-1}
\log \frac{1}{\Pc(\Phi_n)} = D(P_{\tilde{X}^n\tilde{Y}^n} \| P_{X^nY^n}) + I(\tilde{M}\wedge \tilde{Y}^n | \tilde{X}^n) + I(\tilde{Z}^n \wedge \tilde{X}^n|\tilde{M},\tilde{Y}^n).
\end{align}
On the other hand, since
\begin{align*}
nR \ge H(\tilde{M}) \ge I(\tilde{M}\wedge \tilde{X}^n|\tilde{Y}^n),
\end{align*}
the identity \eqref{eq:WZ-changed-measure-divergence} also implies
\begin{align} \label{eq:WZ-multiletter-bound-2}
\log \frac{1}{\Pc(\Phi_n)} \ge D(P_{\tilde{X}^n\tilde{Y}^n} \| P_{X^nY^n}) + I(\tilde{M}\wedge \tilde{X}^n|\tilde{Y}^n) + I(\tilde{Z}^n \wedge \tilde{X}^n|\tilde{M},\tilde{Y}^n) - nR.
\end{align}
Since the excess distortion never occurs on the set ${\cal C}$, we have
\begin{align} \label{eq:WZ-distortion-bound}
D \ge \mathbb{E}\bigg[ \sum_{j=1}^n \frac{1}{n} d(\tilde{X}_j, \tilde{Y}_j,\tilde{Z}_j) \bigg] 
= \mathbb{E}[ d(\tilde{X}_J, \tilde{Y}_J, \tilde{Z}_J)],
\end{align}
where $J$ is the random variable uniformly distributed on the index set $\{1,\ldots,n\}$.
The rest of the task is to single-letterize the right hand sides of \eqref{eq:WZ-multiletter-bound-1} and \eqref{eq:WZ-multiletter-bound-2}.

In the single-letterization, we use the following inequality from \cite[Proposition 1]{TyaWat20}:
\begin{align} \label{eq:single-letterize-divergence-entropy}
D(P_{\tilde{X}^n\tilde{Y}} \| P_{X^n Y^n}) + H(\tilde{X}^n|\tilde{Y}^n) \ge n \big( D(P_{\tilde{X}_J \tilde{Y}_J}\|P_{XY}) + H(\tilde{X}_J|\tilde{Y}_J) \big).
\end{align}
The following simple identity for any triplets $(A,B,C)$ of random variables also plays a pivotal role (cf.~\eqref{eq:identity-two-forms}): 
\begin{align} \label{eq:three-variable-identity}
I(A \wedge B|C) = I(A \wedge C|B) + I(A \wedge B) - I(A \wedge C).
\end{align}

First, we single-letterize \eqref{eq:WZ-multiletter-bound-2} as follows (the term $nR$ is omitted):
\begin{align}
\lefteqn{ D(P_{\tilde{X}^n \tilde{Y}^n} \| P_{X^n Y^n}) + I(\tilde{M} \wedge \tilde{X}^n | \tilde{Y}^n) + I(\tilde{Z}^n \wedge \tilde{X}^n|\tilde{M},\tilde{Y}^n) } \\
&= D(P_{\tilde{X}^n \tilde{Y}^n} \| P_{X^n Y^n}) + H(\tilde{X}^n | \tilde{Y}^n) - H(\tilde{X}^n | \tilde{M}, \tilde{Y}^n,\tilde{Z}^n) \label{eq:WZ-proof-1} \\
&\ge n D( P_{\tilde{X}_J \tilde{Y}_J} \| P_{XY}) + n H(\tilde{X}_J | \tilde{Y}_J) - \sum_{j=1}^n H(\tilde{X}_j | \tilde{M}, \tilde{X}_j^-, \tilde{Y}^n,\tilde{Z}^n)  \label{eq:WZ-proof-2} \\
&\ge n D( P_{\tilde{X}_J \tilde{Y}_J} \| P_{XY}) + n H(\tilde{X}_J | \tilde{Y}_J) - \sum_{j=1}^n H(\tilde{X}_j | \tilde{M}, \tilde{X}_j^-, \tilde{Y}_j, \tilde{Y}_j^+, \tilde{Z}_j) \label{eq:WZ-proof-3} \\
&= n \big[  D( P_{\tilde{X}_J \tilde{Y}_J} \| P_{XY})  +  H(\tilde{X}_J | \tilde{Y}_J) - H(\tilde{X}_J | \tilde{U}_J, J, \tilde{Y}_J, \tilde{Z}_J) \big] \label{eq:WZ-proof-4} \\
&= n \big[ D( P_{\tilde{X}_J \tilde{Y}_J} \| P_{XY}) + I(\tilde{U}_J, J \wedge \tilde{X}_J | \tilde{Y}_J) + I(\tilde{Z}_J \wedge \tilde{X}_J |\tilde{U}_J,J, \tilde{Y}_J) \big]   \\
&= n \big[ D( P_{\tilde{X}_J \tilde{Y}_J} \| P_{XY}) + I(\tilde{U}_J, J \wedge \tilde{Y}_J | \tilde{X}_J) + I(\tilde{Z}_J \wedge \tilde{X}_J |\tilde{U}_J,J, \tilde{Y}_J)
 + I( \tilde{U}_J, J \wedge \tilde{X}_J) - I(\tilde{U}_J,J \wedge \tilde{Y}_J) \big], \label{eq:WZ-proof-5}
\end{align}
where we set $\tilde{U}_j = (\tilde{M},\tilde{X}_j^-,\tilde{Y}_j^+)$;  
\eqref{eq:WZ-proof-2} follows from \eqref{eq:single-letterize-divergence-entropy} and the chain rule;
\eqref{eq:WZ-proof-3} follows since removing conditions does not decrease the conditional entropy;
and \eqref{eq:WZ-proof-5} follows from \eqref{eq:three-variable-identity} applied for $A=(\tilde{U}_J,J)$, $B=\tilde{X}_J$, and $C=\tilde{Y}_J$.

Next, we single-letterize \eqref{eq:WZ-multiletter-bound-1} as follows:
\begin{align}
\lefteqn{ D(P_{\tilde{X}^n \tilde{Y}^n} \| P_{X^n Y^n}) + I(\tilde{M} \wedge \tilde{Y}^n | \tilde{X}^n) + I(\tilde{Z}^n \wedge \tilde{X}^n|\tilde{M},\tilde{Y}^n) } \\ 
&= D(P_{\tilde{X}^n \tilde{Y}^n} \| P_{X^n Y^n}) + H(\tilde{Y}^n|\tilde{X}^n) - H(\tilde{Y}^n|\tilde{M},\tilde{X}^n) + H(\tilde{X}^n|\tilde{M},\tilde{Y}^n) 
 - H(\tilde{X}^n| \tilde{M}, \tilde{Y}^n,\tilde{Z}^n)
 \label{eq:WZ-proof-6} \\
&=  D(P_{\tilde{X}^n \tilde{Y}^n} \| P_{X^n Y^n}) + H(\tilde{Y}^n|\tilde{X}^n)  
 - H(\tilde{X}^n| \tilde{M}, \tilde{Y}^n,\tilde{Z}^n) + H(\tilde{X}^n|\tilde{M}) - H(\tilde{Y}^n|\tilde{M})
 \label{eq:WZ-proof-6b}  \\
&\ge n \big[ D( P_{\tilde{X}_J \tilde{Y}_J} \| P_{XY}) + H(\tilde{Y}_J|\tilde{X}_J) - H(\tilde{X}_J | \tilde{U}_J,J,\tilde{Y}_J,\tilde{Z}_J) \big] 
  + H(\tilde{X}^n|\tilde{M}) - H(\tilde{Y}^n|\tilde{M}) \label{eq:WZ-proof-7} \\
&=   n \big[ D( P_{\tilde{X}_J \tilde{Y}_J} \| P_{XY}) + H(\tilde{Y}_J|\tilde{X}_J) - H(\tilde{X}_J | \tilde{U}_J,J,\tilde{Y}_J,\tilde{Z}_J) 
  + H(\tilde{X}_J|\tilde{U}_J,J) - H(\tilde{Y}_J|\tilde{U}_J,J)  \big] \label{eq:WZ-proof-8} \\
&=  n \big[ D( P_{\tilde{X}_J \tilde{Y}_J} \| P_{XY}) + H(\tilde{Y}_J|\tilde{X}_J) - H(\tilde{Y}_J|\tilde{U}_J,J,\tilde{X}_J) + H(\tilde{X}_J|\tilde{U}_J,J, \tilde{Y}_J)
 - H(\tilde{X}_J | \tilde{U}_J,J,\tilde{Y}_J,\tilde{Z}_J) 
\big] \label{eq:WZ-proof-9} \\
&= n \big[ D( P_{\tilde{X}_J \tilde{Y}_J} \| P_{XY}) + I(\tilde{U}_J,J \wedge \tilde{Y}_J | \tilde{X}_J) +  I(\tilde{Z}_J \wedge \tilde{X}_J| \tilde{U}_J, J, \tilde{Y}_J)  \big], \label{eq:WZ-proof-10}   
\end{align}  
where 
\eqref{eq:WZ-proof-7} follows from \eqref{eq:single-letterize-divergence-entropy} (where the role of $\tilde{X}^n$ and $\tilde{Y}^n$ interchanged)
and the same manipulations as \eqref{eq:WZ-proof-1}-\eqref{eq:WZ-proof-4} for  $ - H(\tilde{X}^n| \tilde{M}, \tilde{Y}^n,\tilde{Z}^n) $;
\eqref{eq:WZ-proof-8} follows from the sum identity (eg.~see \cite[p.~314]{csiszar-korner:11} or \cite[p.~25]{elgamal-kim-book}):
\begin{align}
H(\tilde{X}^n|\tilde{M}) - H(\tilde{Y}^n|\tilde{M}) = \sum_{j=1}^n\big[ H(\tilde{X}_j | \tilde{M},\tilde{X}_j^-,\tilde{Y}_j^+) - H(\tilde{Y}_j|\tilde{M},\tilde{X}_j^-,\tilde{Y}_j^+) \big].
\end{align}

Consequently, by combining \eqref{eq:WZ-multiletter-bound-1} and  \eqref{eq:WZ-multiletter-bound-2} together with
the single-letter manipulations above, we have
\begin{align*}
\frac{1}{n} \log \frac{1}{\Pc(\Phi_n)} &\ge
 D( P_{\tilde{X}_J \tilde{Y}_J} \| P_{XY}) + I(\tilde{U}_J, J \wedge \tilde{Y}_J | \tilde{X}_J) + I(\tilde{Z}_J \wedge \tilde{X}_J |\tilde{U}_J,J, \tilde{Y}_J) \\
 &\hspace{50mm} + | I( \tilde{U}_J, J \wedge \tilde{X}_J) - I(\tilde{U}_J,J \wedge \tilde{Y}_J) - R|^+.
\end{align*}
This bound together with the distortion constraint \eqref{eq:WZ-distortion-bound} and the cardinality bound to be proved in 
Appendix \ref{appendix:cardinality} imply
(see also the identity \eqref{eq:WZ-soft-Markov})
\begin{align*}
\frac{1}{n} \log \frac{1}{\Pc(\Phi_n)} \ge F^\star(R,D)
\end{align*}
for any code $\Phi_n$, which completes the converse part of Theorem \ref{theorem:WZ}.
\qed

\begin{remark} \label{remark:comparison-to-TyaWat}
Using the same notation as the above converse proof, we can rewrite the single-letterization result in
\cite[Theorem 4]{TyaWat20} as follows:\footnote{We omitted the
distortion constraint term that can be easily handled separately.} for any $\alpha >0$,
\begin{align*}
\lefteqn{
(\alpha+1) D(P_{\tilde{X}^n \tilde{Y}^n} \| P_{X^n Y^n}) + I(\tilde{M}\wedge \tilde{X}^n|\tilde{Y}^n)
+ \alpha\big( I(\tilde{M}\wedge \tilde{Y}^n|\tilde{X}^n) + I(\tilde{Z}^n \wedge \tilde{X}^n|\tilde{M},\tilde{Y}^n) \big)
} \\
&\ge n\big[
(\alpha+1) D(P_{\tilde{X}_J \tilde{Y}_J} \| P_{XY}) + I(\tilde{U}_J,J \wedge \tilde{X}_J | \tilde{Y}_J) 
+ \alpha \big(
I(\tilde{U}_J, J \wedge \tilde{Y}_J | \tilde{X}_J) + I(\tilde{Z}_J \wedge \tilde{X}_J | \tilde{U}_J, J, \tilde{Y}_J)
\big)
\big].
\end{align*}
In the course of this single-letterization, since we need to bound $H(\tilde{X}^n|\tilde{Y}^n)$ and $\alpha H(\tilde{Y}^n|\tilde{X}^n)$
from below using \eqref{eq:single-letterize-divergence-entropy}, we need the factor $(\alpha+1)$ of the divergence term 
$D(P_{\tilde{X}^n\tilde{Y}^n}\| P_{X^nY^n})$, which prevents us to derive the tight exponent. 
A technical novelty of the converse proof in this paper is that
we derive the two single-letterizations, one for $I(\tilde{M}\wedge \tilde{X}^n|\tilde{Y}^n)$ and the other for
$I(\tilde{M} \wedge \tilde{Y}^n|\tilde{X}^n)$ separately, and couple them together via the identity \eqref{eq:three-variable-identity}.

As we mentioned in Section \ref{section:problem-formulation}, the rate-distortion function can be written as
the mutual information difference form \eqref{eq:WZ-RD-difference-form} or the conditional mutual information form \eqref{eq:WZ-RD-conditional-form};
in \cite{TyaWat20}, the latter form was employed,\footnote{The conditional mutual information form was employed in \cite{Oohama:18} as well.} 
and the soft Markov constraint was added on top of that.
As we discussed in the paragraph containing \eqref{eq:identity-two-forms}, the conditional mutual information form can be 
interpreted as the mutual information difference form plus the soft Markov constraint; in other words, the conditional mutual
information form already includes the soft Markov constraint.
It is this insight that enables us to derive the tight strong converse exponent. 
\end{remark}

\section{Proof of Achievability}

In this section, we prove the achievability part of Theorem \ref{theorem:WZ}.
First, for a given $D>0$ and an arbitrarily small positive number $\varepsilon>0$,
we fix an arbitrary joint distribution $P_{\tilde{U}\tilde{X}\tilde{Y}\tilde{Z}}$ satisfying $\mathbb{E}[d(\tilde{X},\tilde{Y},\tilde{Z})]\le D-\varepsilon$.
Let $P_{\bar{U}\bar{X}\bar{Y}\bar{Z}}$ be a joint type satisfying
\begin{align} \label{eq:type-approximation}
d_{\mathtt{var}}(P_{\bar{U}\bar{X}\bar{Y}\bar{Z}}, P_{\tilde{U}\tilde{X}\tilde{Y}\tilde{Z}}) \le \frac{|{\cal U}||{\cal X}||{\cal Y}||{\cal Z}|}{n}.
\end{align} 
Note that, for sufficiently large $n$, the joint type $P_{\bar{U}\bar{X}\bar{Y}\bar{Z}}$ satisfies 
\begin{align} \label{eq:proof-achievability-WZ-type-distortion}
\mathbb{E}[ d(\bar{X},\bar{Y},\bar{Z})] \le \mathbb{E}[d(\tilde{X},\tilde{Y},\tilde{Z})] + \varepsilon \le D.
\end{align}
We consider stochastic encoder and decoder using the exponential matching lemma reviewed in Appendix \ref{appendix:exponential-matching}.
Since we are interested in a lower bound on the non-excess distortion probability, it suffices to construct a code such that
sequences in a fixed type class ${\cal T}_{\bar{X}\bar{Y}}$ can be decoded within the required distortion level $D$. 
For this reason, we only specify code construction for random variable $(X^n,Y^n)$ uniformly distributed on 
${\cal T}_{\bar{X}\bar{Y}}$.\footnote{By an abuse of notation, even though $(X^n,Y^n)$ are different from i.i.d. random variables
used in Section \ref{section:problem-formulation}, we use $(X^n,Y^n)$ here to avoid complicated notations.} 
More precisely, we only specify the behavior of encoder for sequences in the marginal type class ${\cal T}_{\bar{X}}$
and the behavior of decoder for sequences in the marginal type class ${\cal T}_{\bar{Y}}$. For other sequences,
the behaviors of encoder and decoder are defined arbitrarily, but it does not affect our lower bound on the non-excess distortion probability.
 
Let $P_{U^n|X^n}$ be the test channel defined by
\begin{align}
P_{U^n|X^n}(u^n|x^n) &:= \frac{1}{|{\cal T}_{\bar{U}|\bar{X}}(x^n)|} \\
&\le (n+1)^{|{\cal U}||{\cal X}|} \exp\{ - n H(\bar{U}|\bar{X}) \} \label{eq:type-bound-UX}
\end{align}
for $x^n \in {\cal T}_{\bar{X}}$ and $u^n \in {\cal T}_{\bar{U}|\bar{X}}(x^n)$, where the inequality \eqref{eq:type-bound-UX}
follows from the basic property of the conditional type class \cite[Lemma 2.5]{csiszar-korner:11};
and let $Q_{U^n|Y^n}$ be a channel defined by
\begin{align}
Q_{U^n|Y^n}(u^n|y^n) &:= \frac{1}{|{\cal T}_{\bar{U}|\bar{Y}}(y^n)|} \label{eq:type-definition-UY} \\
&\ge \exp\{ - n H(\bar{U}|\bar{Y})\} \label{eq:type-bound-UY}
\end{align}
for $y^n \in {\cal T}_{\bar{Y}}$ and $u^n \in {\cal T}_{\bar{U}|\bar{Y}}(y^n)$, 
where the inequality \eqref{eq:type-bound-UY} again follows from the basic property of the conditional type class \cite[Lemma 2.5]{csiszar-korner:11}.
It should be noted that, in general, the channel
$Q_{U^n|Y^n}$ defined above is different from the induced channel
\begin{align*}
P_{U^n|Y^n}(u^n|y^n) = \sum_{x^n \in {\cal T}_{\bar{X}}} P_{U^n|X^n}(u^n|x^n) P_{X^n|Y^n}(x^n|y^n) 
= \sum_{x^n \in {\cal T}_{\bar{X}}} \frac{1}{|{\cal T}_{\bar{U}|\bar{X}}(x^n)|} \frac{1}{|{\cal T}_{\bar{X}|\bar{Y}}(y^n)|}.
\end{align*} 

Furthermore, let $Q_{Z^n|U^nY^n}$ be the test channel given by
\begin{align}
Q_{Z^n|U^nY^n}(z^n|u^n,y^n) &:= \frac{1}{|{\cal T}_{\bar{Z}|\bar{U}\bar{Y}}(u^n,y^n)|} \label{eq:type-definition-ZUY} \\
&\ge \exp\{- n H(\bar{Z}|\bar{U},\bar{Y})\} \label{eq:type-bound-ZUY}
\end{align}
for $(u^n,y^n) \in {\cal T}_{\bar{U}\bar{Y}}$ and $z^n \in {\cal T}_{\bar{Z}|\bar{U}\bar{Y}}(u^n,y^n)$,
where the inequality \eqref{eq:type-bound-ZUY} is from \cite[Lemma 2.5]{csiszar-korner:11}.

We use the conditional exponential matching lemma for the following parameters:
\begin{align*}
{\cal A} &\leftarrow {\cal T}_{\bar{X}}, \\
{\cal B} &\leftarrow {\cal M} \times {\cal T}_{\bar{Y}}, \\
{\cal C} &\leftarrow {\cal T}_{\bar{U}} \times {\cal M}, \\
P_{C|A}(c|a) &\leftarrow P_{\check{U}^n \check{M}|X^n}(\check{u}^n,\check{m}|x^n) := P_{U^n|X^n}(\check{u}^n|x^n) \frac{1}{|{\cal M}|}, \\
Q_{C|B}(c|b) &\leftarrow Q_{\check{U}^n\check{M}|MY^n}(\check{u}^n,\check{m}|m,y^n) := Q_{U^n|Y^n}(\check{u}^n|y^n) \bol{1}[\check{m}=m], \\
P_{B|AC}(b|a,c) &\leftarrow P_{MY^n|X^n\check{M}}(m,y^n|x^n,\check{m}) := P_{Y^n|X^n}(y^n|x^n) \bol{1}[m=\check{m}],
\end{align*}
where $P_{Y^n|X^n}(\cdot |x^n)$ is the uniform distribution on ${\cal T}_{\bar{Y}|\bar{X}}(x^n)$ for $x^n \in {\cal T}_{\bar{X}}$.
For $X^n$ uniformly distributed on ${\cal T}_{\bar{X}}$, let 
\begin{align*}
(\check{U}^n,\check{M}) := (\check{U}^n,\check{M})_{P_{\check{U}^n \check{M}|X^n}(\cdot,\cdot|X^n) }
\end{align*}
be the sample by the encoder, and the encoder transmit $\check{M}$ to the receiver. 
For $M=\check{M}$ and $Y^n \sim P_{Y^n|X^n}(\cdot |X^n)$, let 
\begin{align*}
(\hat{U}^n,\hat{M}) := (\check{U}^n,\check{M})_{Q_{\check{U}^n\check{M}|MY^n}(\cdot,\cdot|M,Y^n)}
\end{align*}
be the sample by the decoder. Then, the decoder outputs a random sample $Z^n$ generated by $Q_{Z^n|U^nY^n}(\cdot|\hat{U}^n,Y^n)$.

By the conditional exponential matching lemma, conditioned on $(\check{U}^n,X^n,Y^n) \in {\cal T}_{\bar{U}\bar{X}\bar{Y}}$,
we have\footnote{When $(\check{U}^n,X^n,Y^n) \in {\cal T}_{\bar{U}\bar{X}\bar{Y}}$, note that $\check{U}^n \in {\cal T}_{\bar{U}|\bar{Y}}(Y^n)$.
If $\check{U}^n \notin {\cal T}_{\bar{U}|\bar{Y}}(Y^n)$, then $Q_{U^n|Y^n}(\check{U}^n|Y^n)=0$
and \eqref{eq:Poisson-matching-step-2} is interpreted as $1$; and the inequality in \eqref{eq:Poisson-matching-step-3}
does not hold.}
\begin{align}
\Pr\big( \hat{U}^n \neq \check{U}^n \mid X^n, \check{U}^n, \check{M}, M, Y^n \big)
&\le 1 - \bigg(1 + \frac{P_{\check{U}^n \check{M}|X^n}(\check{U}^n,\check{M}|X^n)}{Q_{\check{U}^n\check{M}|MY^n}(\check{U}^n,\check{M}|M,Y^n)} \bigg)^{-1} \\
&= 1 - \bigg( 1 + \frac{\frac{1}{|{\cal M}| } P_{U^n|X^n}(\check{U}^n|X^n) }{Q_{U^n|Y^n}(\check{U}^n|Y^n)} \bigg)^{-1} \label{eq:Poisson-matching-step-2} \\
&\le 1 - \bigg(1 + (n+1)^{|{\cal U}||{\cal X}|} \exp\big\{ n ( H(\bar{U}|\bar{Y}) - H(\bar{U}|\bar{X}) -R) \big\} \bigg)^{-1} \label{eq:Poisson-matching-step-3} \\
&= 1 - \bigg(1 + (n+1)^{|{\cal U}||{\cal X}|} \exp\big\{ n (I(\bar{U} \wedge \bar{X}) - I(\bar{U} \wedge \bar{Y})-R) \big\} \bigg)^{-1},
\end{align}
where the second inequality follows from \eqref{eq:type-bound-UX} and \eqref{eq:type-bound-UY}.
Furthermore, since 
\begin{align*}
\lefteqn{
1 + (n+1)^{|{\cal U}||{\cal X}|} \exp\big\{ n (I(\bar{U} \wedge \bar{X}) - I(\bar{U} \wedge \bar{Y})-R) \big\} } \\
&\le 2 (n+1)^{|{\cal U}||{\cal X}|} \exp\big\{ n |I(\bar{U} \wedge \bar{X}) - I(\bar{U} \wedge \bar{Y})-R|^+ \big\}, 
\end{align*}
conditioned on $(\check{U}^n,X^n,Y^n) \in {\cal T}_{\bar{U}\bar{X}\bar{Y}}$, we have
\begin{align*}
\lefteqn{ \Pr\big( \hat{U}^n \neq \check{U}^n \mid X^n, \check{U}^n, \check{M}, M, Y^n \big) } \\
&\le 1 - \frac{(n+1)^{-|{\cal U}||{\cal X}|}}{2} \exp\big\{ - n |I(\bar{U} \wedge \bar{X}) - I(\bar{U} \wedge \bar{Y})-R|^+ \big\}.
\end{align*}

Since 
\begin{align} \label{eq:conditional-type-class-ZUXY}
|{\cal T}_{\bar{Z}|\bar{U}\bar{X}\bar{Y}}(u^n,x^n,y^n)| \ge (n+1)^{-|{\cal U}||{\cal X}||{\cal Y}||{\cal Z}|} \exp\{ n H(\bar{Z}|\bar{U},\bar{X},\bar{Y})\}
\end{align}
for any $(u^n,x^n,y^n) \in {\cal T}_{\bar{U}\bar{X}\bar{Y}}$ \cite[Lemma 2.5]{csiszar-korner:11}, by combining with \eqref{eq:type-bound-ZUY}, we have
\begin{align}
\Pr\big( Z^n \in {\cal T}_{\bar{Z}|\bar{U}\bar{X}\bar{Y}}(\check{U}^n,X^n,Y^n) \mid X^n,Y^n, \check{U}^n,\hat{U}^n, \hat{U}^n = \check{U}^n \big)
\ge (n+1)^{-|{\cal U}||{\cal X}||{\cal Y}||{\cal Z}|} \exp\{ - n I(\bar{Z} \wedge \bar{X} | \bar{U},\bar{Y})\}.
\label{eq:lower-probability-conditional-type-class-ZUXY}
\end{align}
Similarly, since
\begin{align*}
|{\cal T}_{\bar{U}|\bar{X}\bar{Y}}(x^n,y^n)| \ge (n+1)^{-|{\cal U}||{\cal X}||{\cal Y}|} \exp\{ n H(\bar{U}|\bar{X},\bar{Y})\}
\end{align*}
for any $(x^n,y^n) \in {\cal T}_{\bar{X}\bar{Y}}$ and
\begin{align*}
P_{U^n|X^n}(u^n|x^n) \ge \exp\{ - n H(\bar{U}|\bar{X})\},
\end{align*}
by noting that the exponential matching guarantees $\check{U}^n \sim P_{U^n|X^n}(\cdot|X^n)$,
we have
\begin{align}
\mathbb{E}\big[ \bol{1}[ (\check{U}^n,X^n,Y^n) \in {\cal T}_{\bar{U}\bar{X}\bar{Y}} \big]
&= \Pr\big( (\check{U}^n,X^n,Y^n) \in {\cal T}_{\bar{U}\bar{X}\bar{Y}} \big) \\
&\ge (n+1)^{-|{\cal U}||{\cal X}||{\cal Y}|} \exp\{ - n I(\bar{U}\wedge \bar{Y}|\bar{X})\}.
\label{eq:probability-lower-bound-UXY}
\end{align}
By combining these estimates, we have
\begin{align*}
\lefteqn{ \Pr\big( (\check{U}^n, X^n,Y^n,Z^n) \in {\cal T}_{\bar{U}\bar{X}\bar{Y}\bar{Z}} \big) } \\
&\ge \mathbb{E}\bigg[ \bol{1}\big[ (\check{U}^n,X^n,Y^n) \in {\cal T}_{\bar{U}\bar{X}\bar{Y}} \big] \cdot
\Pr\big( \hat{U}^n = \check{U}^n \mid X^n, \check{U}^n, \check{M}, M, Y^n \big)  \\
&\hspace{60mm} \cdot \Pr\big( Z^n \in {\cal T}_{\bar{Z}|\bar{U}\bar{X}\bar{Y}}(\check{U}^n,X^n,Y^n) \mid X^n,Y^n, \check{U}^n,\hat{U}^n, \hat{U}^n = \check{U}^n \big)
  \bigg] \\
&\ge \frac{(n+1)^{-|{\cal U}||{\cal X}|(|{\cal Y}||{\cal Z}|+1)}}{2} 
\mathbb{E}\big[ \bol{1}[ (\check{U}^n,X^n,Y^n) \in {\cal T}_{\bar{U}\bar{X}\bar{Y}}] \big] \\
&\hspace{60mm} \cdot \exp\big\{ - n ( I(\bar{Z}\wedge \bar{X} | \bar{U},\bar{Y}) + |I(\bar{U} \wedge \bar{X}) - I(\bar{U} \wedge \bar{Y})-R|^+ ) \big\} \\ 
&\ge \frac{(n+1)^{-|{\cal U}||{\cal X}|(|{\cal Y}|(|{\cal Z}|+1)+1)}}{2}
  \exp\big\{ - n ( I(\bar{U}\wedge \bar{Y}|\bar{X})+I(\bar{Z}\wedge \bar{X} | \bar{U},\bar{Y}) + |I(\bar{U} \wedge \bar{X}) - I(\bar{U} \wedge \bar{Y})-R|^+ ) \big\}.  
\end{align*}
By \eqref{eq:proof-achievability-WZ-type-distortion}, we have
\begin{align*}
\lefteqn{ \Pr\big( d_n(X^n,Y^n,Z^n) \le D \big) } \\
 &\ge \Pr\big( (\check{U}^n, X^n,Y^n,Z^n) \in {\cal T}_{\bar{U}\bar{X}\bar{Y}\bar{Z}} \big) \\
&\ge \frac{(n+1)^{-|{\cal U}||{\cal X}|(|{\cal Y}|(|{\cal Z}|+1)+1)}}{2}
 \exp\big\{ - n ( I(\bar{U}\wedge \bar{Y}|\bar{X})+I(\bar{Z}\wedge \bar{X} | \bar{U},\bar{Y}) + |I(\bar{U} \wedge \bar{X}) - I(\bar{U} \wedge \bar{Y})-R|^+ ) \big\}.  
\end{align*}
Finally, by noting that the type class ${\cal T}_{\bar{X}\bar{Y}}$ occur with
\begin{align} \label{eq:probability-lower-bound-XY}
P_{XY}^n({\cal T}_{\bar{X}\bar{Y}}) \ge (n+1)^{-|{\cal X}||{\cal Y}|}\exp\big\{ - n D(P_{\bar{X}\bar{Y}}\|P_{XY})\},
\end{align}
we have a code $\Phi_n$ such that
\begin{align*}
\lefteqn{ \Pc(\Phi_n) \ge \frac{(n+1)^{-|{\cal X}|(|{\cal U}||{\cal Y}|(|{\cal Z}|+1)+|{\cal U}|+|{\cal Y}|)}}{2} } \\ 
& \hspace{20mm} \cdot  
\exp\big\{ - n (D(P_{\bar{X}\bar{Y}}\|P_{XY}) + I(\bar{U}\wedge \bar{Y}|\bar{X})+I(\bar{Z}\wedge \bar{X} | \bar{U},\bar{Y}) 
 + |I(\bar{U} \wedge \bar{X}) - I(\bar{U} \wedge \bar{Y})-R|^+) \big\}.
\end{align*}
Thus, for sufficiently large $n$, by the continuity of information quantities and \eqref{eq:type-approximation}, we have
\begin{align} \label{eq:achievability-general}
F(R,D) \le D(P_{\tilde{X}\tilde{Y}}\|P_{XY}) + I(\tilde{U}\wedge \tilde{Y}|\tilde{X}) 
+ I(\tilde{Z}\wedge \tilde{X}|\tilde{U},\tilde{Y}) + |I(\tilde{U}\wedge \tilde{X})-I(\tilde{U}\wedge \tilde{Y})-R|^+ + \varepsilon.
\end{align}
Since $P_{\tilde{U}\tilde{X}\tilde{Y}\tilde{Z}}$ can be arbitrary as long as $\mathbb{E}[d(\tilde{X},\tilde{Y},\tilde{Z})]\le D-\varepsilon$, we have
\begin{align*}
F(R,D) \le F^\star(R,D-\varepsilon) + \varepsilon.
\end{align*}
Finally, since $\varepsilon>0$ can be arbitrary, we have the desired achievability bound. \qed

\begin{remark} \label{remark:proof-dfc}
Since the above proof cannot be applied for $D=0$, we need to consider a separate proof 
for the distributed function computation case (cf.~Corollary \ref{corollary:dfc}).
For an arbitrarily fixed joint distribution $P_{\tilde{U}\tilde{X}\tilde{Y}}$,
let $P_{\bar{U}\bar{X}\bar{Y}}$ be an approximation by type as in \eqref{eq:type-approximation}, and let $\bar{Z}=f(\bar{X},\bar{Y})$. 
Code construction and analyses are almost the same as above 
except the following. The cardinality of the type class ${\cal T}_{\bar{Z}|\bar{U}\bar{X}\bar{Y}}(u^n,x^n,y^n)$ in \eqref{eq:conditional-type-class-ZUXY}
is $1$; thus, the lower bound in \eqref{eq:lower-probability-conditional-type-class-ZUXY} becomes
$\exp\{ - n H(\bar{Z}|\bar{U},\bar{Y})\}$. Finally, for $f_n(x^n,y^n)=(f(x_1,y_1),\ldots,f(x_n,y_n))$, we have
\begin{align*}
\Pr\big( Z^n = f_n(X^n,Y^n) \big) \ge \Pr\big( (\check{U}^n,X^n,Y^n,Z^n) \in {\cal T}_{\bar{U}\bar{X}\bar{Y}\bar{Z}} \big),
\end{align*}
and the right hand side can be bounded in the same manner as above. 
\end{remark}

\begin{remark} \label{ramark:naive-achievability}
By a much naive argument, for an arbitrarily small $\varepsilon>0$,
we can show that
\begin{align} \label{eq:achievability-special}
F(R,D) \le D(P_{\tilde{X}\tilde{Y}}\|P_{XY}) + I(\tilde{U}\wedge \tilde{X}|\tilde{Y}) + I(\tilde{Z}\wedge \tilde{X}|\tilde{U},\tilde{Y}) + \varepsilon
\end{align}
for an arbitrarily fixed $P_{\tilde{U}\tilde{X}\tilde{Y}\tilde{Z}}$ satisfying $\mathbb{E}[d(\tilde{X},\tilde{Y},\tilde{Z})]\le D-\varepsilon$.
Since $\varepsilon>0$ can be arbitrary, we have
\begin{align} \label{eq:naive-achievability}
F(R,D) \le \min_{P_{\tilde{U}\tilde{X}\tilde{Y}\tilde{Z}} \atop \mathbb{E}[d(\tilde{X},\tilde{Y},\tilde{Z})]\le D}
\big[
D(P_{\tilde{X}\tilde{Y}}\| P_{XY}) + I(\tilde{U} \wedge \tilde{X}|\tilde{Y}) + I(\tilde{Z}\wedge \tilde{X}|\tilde{U},\tilde{Y})
\big].
\end{align}
In fact, when $I(\tilde{U}\wedge \tilde{X})- I(\tilde{U}\wedge \tilde{Y}) < 0$, the achievability bound \eqref{eq:achievability-special} is tighter than
\eqref{eq:achievability-general}. However, as discussed in Remark \ref{remark:optimizer}, such $P_{\tilde{U}\tilde{X}\tilde{Y}\tilde{Z}}$ is not an optimizer.

To verify \eqref{eq:achievability-special}, let $P_{\bar{U}\bar{X}\bar{Y}\bar{Z}}$ be approximation by type as in \eqref{eq:type-approximation}.
Let $(X^n,Y^n)$ be uniformly distributed on ${\cal T}_{\bar{X}\bar{Y}}$, and let $Q_{U^n|Y^n}$ and $Q_{Z^n|U^nY^n}$ be defined 
by \eqref{eq:type-definition-UY} and \eqref{eq:type-definition-ZUY}, respectively. Without using any communication, the decoder 
locally samples $\hat{U}^n$ generated by $Q_{U^n|Y^n}(\cdot|Y^n)$ and $Z^n$ generated by $Q_{Z^n|U^nY^n}(\cdot|\hat{U}^n,Y^n)$.
Then, by a similar reasoning as \eqref{eq:lower-probability-conditional-type-class-ZUXY}, 
\eqref{eq:probability-lower-bound-UXY} used for $Q_{U^n|Y^n}$ instead of $P_{U^n|X^n}$, 
and \eqref{eq:probability-lower-bound-XY}, we have
\begin{align*}
\Pc(\Phi_n) \ge (n+1)^{|{\cal X}||{\cal Y}|(|{\cal U}|(|{\cal Z}|+1)+1)}
\exp\big\{ - n (D(P_{\bar{X}\bar{Y}}\|P_{XY}) + I(\bar{U}\wedge \bar{X}|\bar{Y})+I(\bar{Z}\wedge \bar{X} | \bar{U},\bar{Y}) \big\},
\end{align*}
which implies \eqref{eq:achievability-special} for sufficiently large $n$.
\end{remark}

\begin{remark} \label{remark:deviation-Markovity}
In \cite{TakWat:25}, in order to identify the role of soft Markov constraint in the achievability, we considered 
(in the terminology of the WZ problem) the ``deviation of Markovity in forward direction" in the following sense.\footnote{Here,
the terms ``forward" or ``reverse" are used to indicate whether the auxiliary random variable appears on the conditional side or not;
these terms are introduced to emphasize that the soft Markov constraint appears by different principles in each case.}
Note that we can write (see also \eqref{eq:WZ-soft-Markov})
\begin{align*}
D(P_{\bar{X}\bar{Y}} \| P_{XY}) + I(\bar{U} \wedge \bar{Y}|\bar{X}) = D(P_{\bar{X}}\|P_X) + D(P_{\bar{Y}|\bar{U}\bar{X}}\|P_{Y|X}|P_{\bar{U}\bar{X}}).
\end{align*}
Then, the term $D(P_{\bar{Y}|\bar{U}\bar{X}}\|P_{Y|X}|P_{\bar{U}\bar{X}})$ appears as the exponent of the probability
\begin{align*}
P_{Y|X}^n({\cal T}_{\bar{Y}|\bar{U}\bar{X}}(u^n,x^n)|x^n) \stackrel{\cdot}{=} \exp\{ - n D(P_{\bar{Y}|\bar{U}\bar{X}}\|P_{Y|X}|P_{\bar{U}\bar{X}}) \}
\end{align*}
for $(u^n,x^n) \in {\cal T}_{\bar{U}\bar{X}}$, where $\stackrel{\cdot}{=}$ is the equality up to the polynomial factor of $n$.

On the other hand, in our achievability proof above, we considered the ``deviation of Markovity in reverse direction." More specifically, the soft
Markov constraint $I(\bar{U}\wedge \bar{Y}|\bar{X})$ appears in \eqref{eq:probability-lower-bound-UXY} via the decomposition
\begin{align*}
I(\bar{U}\wedge \bar{Y}|\bar{X}) = H(\bar{U}|\bar{X}) - H(\bar{U}|\bar{X},\bar{Y}).
\end{align*}
The soft Markov constraint $I(\bar{Z}\wedge \bar{X}|\bar{U},\bar{Y})$ also appears 
in \eqref{eq:lower-probability-conditional-type-class-ZUXY} as the deviation of Markovity in reverse direction.
We do not know if it is possible to prove the achievability in such a manner that 
$I(\bar{Z}\wedge \bar{X}|\bar{U},\bar{Y})$ appears as the deviation of Markovity in forward direction.

In the derivation of \eqref{eq:probability-lower-bound-UXY}, 
it is crucial that $\check{U}^n$ is sampled according to $P_{U^n|X^n}(\cdot |X^n)$ exactly.
One advantage of the Poisson matching (exponential matching) approach of \cite{LiAna:21} is that 
sampling of $\check{U}^n \sim P_{U^n|X^n}(\cdot |X^n)$ is exact; in some other approach, such as the channel simulation
via soft covering \cite{cuff:12}, sampling is only approximate and not suitable for our analysis. 
\end{remark}


\appendix

\subsection{Exponential Matching Lemma} \label{appendix:exponential-matching}

In the achievability proof, we use the Poisson matching lemma introduced in \cite{LiAna:21}.
Since we only consider discrete alphabets in this paper, we review a simplified version of this approach, the exponential matching lemma, 
in this appendix. 

For a given probability distribution $P \in {\cal P}({\cal C})$ on a finite alphabet ${\cal C}$, we consider a problem 
of sampling a random variable distributed according to $P$. Let $Z_1,\ldots,Z_{|{\cal C}|} \sim \rom{Exp}(1)$ be independent
random variables distributed according to the exponential distribution with rate $1$, 
and let 
\begin{align} \label{eq:sampling-P}
\check{C}_P := \argmin_{c \in {\cal C}: \atop P(c)>0} \frac{Z_c}{P(c)}.
\end{align}
From a property of the exponential distribution (eg.~see \cite[Lemma 8.5]{MitzenmacherUpfal:book2}), it can be verified that
$\check{C}_P$ is distributed according to $P$. 

Now, let us consider the same sampling procedure using another distribution $Q \in {\cal P}({\cal C})$, i.e., let 
\begin{align} \label{eq:sampling-Q}
\check{C}_Q := \argmin_{c \in {\cal C}: \atop Q(c)>0} \frac{Z_c}{Q(c)},
\end{align}
where we use the same exponential random variables $Z_1,\ldots,Z_{|{\cal C}|}$ as \eqref{eq:sampling-P}.
Then, how $\check{C}_P$ and $\check{C}_Q$ are related. The exponential matching lemma provides an upper bound on the
probability of disagreement.
\begin{lemma}[Exponential matching lemma \cite{LiAna:21}] \label{lemma:exponential-matching}
For $\tilde{C}_P$ and $\tilde{C}_Q$ sampled according to, respectively, we have
\begin{align*}
\Pr( \check{C}_P \neq \check{C}_Q \mid \check{C}_P = c) \le 1 - \bigg( 1 + \frac{P(c)}{Q(c)} \bigg)^{-1}
\end{align*}
for every $c \in {\cal C}$ such that $P(c) > 0$.\footnote{The right hand side is regarded as $1$ if $Q(c)=0$.}
\end{lemma}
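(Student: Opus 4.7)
The plan is to prove the equivalent lower bound $\Pr(\check{C}_Q=c \mid \check{C}_P=c) \ge Q(c)/(P(c)+Q(c))$, from which the complementary upper bound in the statement is immediate upon rewriting $1-(1+P(c)/Q(c))^{-1}=P(c)/(P(c)+Q(c))$. The natural strategy is to compute $\Pr(\check{C}_P=c,\check{C}_Q=c)$ and $\Pr(\check{C}_P=c)$ directly and take the ratio, using the crucial fact that both argmins are built from the \emph{same} family $\{Z_{c'}\}_{c'\in\mathcal{C}}$ of independent $\mathrm{Exp}(1)$ variables.

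The key step is to condition on $Z_c=z$. Independence of the $Z_{c'}$ for $c'\neq c$ lets the event $\{\check{C}_P=c\}$ factor as $\bigcap_{c'\neq c}\{Z_{c'}>z P(c')/P(c)\}$, and the joint event $\{\check{C}_P=c,\check{C}_Q=c\}$ as $\bigcap_{c'\neq c}\{Z_{c'}>z\max(P(c')/P(c),\,Q(c')/Q(c))\}$. Using $\Pr(Z_{c'}>t)=e^{-t}$, each of these conditional probabilities collapses to a single exponential in $z$. Integrating against the $\mathrm{Exp}(1)$ density of $Z_c$ yields
\[
\Pr(\check{C}_P=c)=P(c),\qquad
\Pr(\check{C}_P=c,\check{C}_Q=c)=\Bigl(\sum_{c'\in\mathcal{C}}\max\!\bigl(P(c')/P(c),\,Q(c')/Q(c)\bigr)\Bigr)^{-1};
\]
the former is a sanity check that $\check{C}_P\sim P$, and the latter is the main identity.

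The final step is the elementary inequality $\max(a,b)\le a+b$ for $a,b\ge 0$ applied termwise. Since $\sum_{c'}P(c')/P(c)=1/P(c)$ and likewise for $Q$, the denominator is bounded above by $1/P(c)+1/Q(c)$, so $\Pr(\check{C}_P=c,\check{C}_Q=c)\ge P(c)Q(c)/(P(c)+Q(c))$; dividing by $P(c)$ delivers the target bound. The case $Q(c)=0$ is the footnote convention (the right-hand side is $1$), and points with $P(c')=0$ or $Q(c')=0$ are automatically excluded from the corresponding argmin and therefore cause no issue. I do not anticipate a genuine obstacle here: the whole argument is a one-line consequence of the exponential tail and the $\max\le$ sum inequality. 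The only mild care needed is to track that the joint event introduces \emph{exactly} the maximum of the two threshold ratios, which is what allows the product structure over $c'\neq c$ to collapse cleanly when integrated against the marginal of $Z_c$.
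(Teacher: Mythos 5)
The paper states this lemma as a cited result from Li and Anantharam~\cite{LiAna:21} and does not reproduce a proof, so there is no in-paper argument to compare against. Your derivation is correct and self-contained, and it is the natural way to prove the discrete (exponential) specialization. Conditioning on $Z_c=z$ factors both events over $c'\neq c$ by independence, and the exponential tail $\Pr(Z_{c'}>t)=e^{-t}$ turns the product into a single exponential; integrating against the $\mathrm{Exp}(1)$ density of $Z_c$ gives the exact identity
\[
\Pr\big(\check{C}_P=c,\ \check{C}_Q=c\big)=\Bigl(\textstyle\sum_{c'\in\mathcal{C}}\max\!\bigl(P(c')/P(c),\,Q(c')/Q(c)\bigr)\Bigr)^{-1},
\]
together with the sanity check $\Pr(\check{C}_P=c)=P(c)$. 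The termwise bound $\max(a,b)\le a+b$ then gives the denominator $\le 1/P(c)+1/Q(c)$, and dividing out $P(c)$ yields $\Pr(\check{C}_Q=c\mid\check{C}_P=c)\ge\bigl(1+P(c)/Q(c)\bigr)^{-1}$, which is exactly the complementary form of the claim. You also handle the boundary cases correctly: the convention $0/0=0$ makes the $\max$-formula valid when $P(c')=0$ or $Q(c')=0$ (those $c'$ simply drop out of the relevant argmin), ties are measure zero, and $Q(c)=0$ is the trivial case where the bound is $1$. One remark worth making in a write-up: your proof actually establishes the \emph{exact} agreement probability, not just the bound; in Li--Anantharam's setting this is proved for general (possibly non-discrete) measures via a Poisson functional representation, and your argument is precisely its finite-alphabet shadow, obtained by replacing the Poisson process with a finite family of $\mathrm{Exp}(1)$ clocks.
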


For coding applications, we usually use a conditional version of the exponential matching lemma.
Let $P_{ABC} = P_A P_{B|A}P_{C|AB}$ be a joint distribution on ${\cal A}\times {\cal B}\times {\cal C}$.
As before, let $Z_1,\ldots,Z_{|{\cal C}|} \sim \rom{Exp}(1)$ be independent, and let
\begin{align}
\check{C}_a := \check{C}_{P_{C|A}(\cdot|a)} = \argmin_{ c\in {\cal C}: \atop P_{C|A}(c|a) > 0} \frac{Z_c}{P_{C|A}(c|a)}.
\end{align}
For a conditional distribution $Q_{C|B}$, which may not be the same as $P_{C|B}$, let 
\begin{align}
\hat{C}_b := \check{C}_{Q_{C|B}(\cdot|b)} = \argmin_{ c\in {\cal C}: \atop Q_{C|B}(c|b) > 0} \frac{Z_c}{Q_{C|B}(c|b)}.
\end{align}
Then, by the exponential matching lemma (Lemma \ref{lemma:exponential-matching}) applied for $P \leftarrow P_{C|A}(\cdot|a)$
and $Q \leftarrow Q_{C|B}(\cdot|b)$, we have
\begin{align}
\Pr\big( \check{C}_a \neq \hat{C}_b \mid \check{C}_a = c \big) \le 1 - \bigg( 1 + \frac{P_{C|A}(c|a)}{Q_{C|B}(c|b)} \bigg)^{-1}
\end{align}
for every $a\in {\cal A}$ and $b\in {\cal B}$ and every $c \in {\cal C}$ with $P_{C|A}(c|a)>0$.

For $A \sim P_A$, note that $(A, \check{C}_A) \sim P_{AC}$. Then, let $B \sim P_{B|AC}(\cdot | A, \check{C}_A)$.
Note that $B \mc (A, \check{C}_A) \mc \{ Z_c\}_{c\in {\cal C}}$ holds, i.e., conditioning on $B=b$
does not affect the conditional distribution of $\{ Z_c\}_{c\in {\cal C}}$ given
$A=a$ and $\check{C}_a=c$. Thus, we have the following conditional version of Lemma \ref{lemma:exponential-matching}.
\begin{lemma}[Conditional exponential matching lemma \cite{LiAna:21}] \label{lemma:conditional-exponential-matching}
\begin{align}
\Pr\big( \check{C}_A \neq \hat{C}_B \mid A=a, B=b, \check{C}_A = c\big) \le 1 - \bigg( 1 + \frac{P_{C|A}(c|a)}{Q_{C|B}(c|b)} \bigg)^{-1}.
\end{align}
\end{lemma}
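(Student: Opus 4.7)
The plan is to deduce the conditional version from the unconditional Lemma \ref{lemma:exponential-matching} by exploiting the Markov structure emphasized in the paragraph immediately preceding the statement. Since $\check{C}_A$ depends only on $A$ and the exponential variables $\{Z_c\}_{c\in{\cal C}}$, and $\hat{C}_B$ depends only on $B$ and the same $Z_c$'s, once we condition on fixed values $A=a$ and $B=b$ the probability in question is governed entirely by the conditional law of $\{Z_c\}_{c\in{\cal C}}$ given $(A,B)=(a,b)$.

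First, for a fixed pair $(a,b)$, I would apply Lemma \ref{lemma:exponential-matching} with $P \leftarrow P_{C|A}(\cdot|a)$ and $Q \leftarrow Q_{C|B}(\cdot|b)$, treating the $Z_c$'s as the driving randomness. This immediately yields
$$
\Pr\bigl(\check{C}_a \neq \hat{C}_b \mid \check{C}_a = c\bigr) \le 1 - \bigg(1 + \frac{P_{C|A}(c|a)}{Q_{C|B}(c|b)}\bigg)^{-1},
$$
where the probability on the left involves only the $Z_c$'s and no further randomness.

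Next, I would promote this pointwise bound to the desired conditional statement using the Markov chain $B \mc (A,\check{C}_A) \mc \{Z_c\}_{c\in{\cal C}}$. Because $B$ is sampled from $P_{B|AC}(\cdot|A,\check{C}_A)$, which is a kernel depending on $(A,\check{C}_A)$ alone, the conditional law of $\{Z_c\}_{c\in{\cal C}}$ given $(A=a,\check{C}_A=c,B=b)$ coincides with the conditional law given only $(A=a,\check{C}_A=c)$. Hence the left-hand side of the claimed inequality equals the left-hand side of the displayed bound above, and the lemma follows.

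The main delicacy is verifying this Markov chain. Conditioning on $\check{C}_A=c$ already reshapes the joint law of the $Z_c$'s (it forces $Z_c/P_{C|A}(c|a)$ to be the minimum among the ratios), so one must check that the additional conditioning on $B=b$ does not further perturb that reshaped law. This is immediate once one writes out $P_{B|AC}$ as a kernel acting on $(A,C)$ only, without reading any other coordinate of the $Z_c$'s, so the argument reduces to a clean data-processing step rather than a genuinely new calculation.
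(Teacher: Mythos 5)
Your proof is correct and follows the same route the paper takes: apply the unconditional exponential matching lemma pointwise for fixed $(a,b)$, then observe that the Markov chain $B \mc (A,\check{C}_A) \mc \{Z_c\}_{c\in{\cal C}}$ (which holds because $B$ is drawn from the kernel $P_{B|AC}$ acting on $(A,\check{C}_A)$ alone) means the additional conditioning on $B=b$ leaves the conditional law of $\{Z_c\}$ unchanged. The identification of the Markov chain as the crux, and the remark that the delicate point is that conditioning on $\check{C}_A=c$ already distorts the law of the $Z_c$'s but $B$ adds nothing further, matches the paper's argument exactly.
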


\subsection{Cardinality bounds} \label{appendix:cardinality}

By the support lemma \cite[Lemma 15.4]{csiszar-korner:11}, we can restrict the cardinality of $\tilde{U}$ to 
$|{\cal U}|\le |{\cal X}||{\cal Y}||{\cal Z}|+1$ as follows. We set the following functions on ${\cal P}({\cal X}\times {\cal Y}\times {\cal Z})$:
\begin{align*}
g_1(P_{\tilde{X}\tilde{Y}\tilde{Z}}) &= H(\tilde{Z}|\tilde{Y}) - H(\tilde{Z}|\tilde{X},\tilde{Y}) - H(\tilde{Y}|\tilde{X}), \\
g_2(P_{\tilde{X}\tilde{Y}\tilde{Z}}) &= H(\tilde{Y})-H(\tilde{X}).
\end{align*}
Then, we observe that
\begin{align*}
P_{\tilde{X}\tilde{Y}\tilde{Z}}(x,y,z) &= \sum_u P_{\tilde{U}}(u) P_{\tilde{X}\tilde{Y}\tilde{Z}|\tilde{U}}(x,y,z|u), \\
H(\tilde{Z}|\tilde{U},\tilde{Y}) - H(\tilde{Z}|\tilde{U},\tilde{X},\tilde{Y}) - H(\tilde{Y}|\tilde{U},\tilde{X}) &= \sum_u P_{\tilde{U}}(u) 
 g_1(P_{\tilde{X}\tilde{Y}\tilde{Z}|\tilde{U}}(\cdot,\cdot,\cdot|u)), \\
 H(\tilde{Y}|\tilde{U}) - H(\tilde{X}|\tilde{U}) &=  \sum_u P_{\tilde{U}}(u) 
 g_2(P_{\tilde{X}\tilde{Y}\tilde{Z}|\tilde{U}}(\cdot,\cdot,\cdot|u)).
\end{align*}
Note that
\begin{align*}
I(\tilde{U}\wedge \tilde{Y}|\tilde{X})+I(\tilde{Z}\wedge \tilde{X}|\tilde{U},\tilde{Y})  &= H(\tilde{Y}|\tilde{X}) + H(\tilde{Z}|\tilde{U},\tilde{Y}) - H(\tilde{Z}|\tilde{U},\tilde{X},\tilde{Y}) - H(\tilde{Y}|\tilde{U},\tilde{X}), \\
I(\tilde{U}\wedge \tilde{X}) - I(\tilde{U}\wedge \tilde{Y}) &= H(\tilde{X}) - H(\tilde{Y}) +  H(\tilde{Y}|\tilde{U}) - H(\tilde{X}|\tilde{U}). 
\end{align*}
When $P_{\tilde{X}\tilde{Y}\tilde{Z}}(x,y,z)$ is preserved, $D(P_{\tilde{X}\tilde{Y}}\|P_{XY})$ and $\mathbb{E}[d(\tilde{X},\tilde{Y},\tilde{Z})]$ are preserved. 
Therefore, by the support lemma, it suffices to take $|{\cal U}|\le |{\cal X}||{\cal Y}||{\cal Z}|-1 + 2 = |{\cal X}||{\cal Y}||{\cal Z}|+1$.

\bibliographystyle{./IEEEtranS}
\bibliography{../../../09-04-17-bibtex/reference}

\end{document}